\documentclass{article}

\usepackage[preprint]{neurips_2026}
\usepackage[utf8]{inputenc} 
\usepackage[T1]{fontenc}    
\usepackage{hyperref}       
\usepackage{url}            
\usepackage{booktabs}       
\usepackage{amsfonts}       
\usepackage{nicefrac}       
\usepackage{microtype}      
\usepackage{xcolor}         
\usepackage{wrapfig}

\usepackage{caption}
\usepackage{subcaption}
\usepackage{amsmath,amssymb}
\usepackage{xcolor}
\usepackage{graphicx}
\usepackage{amsmath,amssymb}
\usepackage{xcolor}
\usepackage{graphicx}
\usepackage{amsmath,amssymb}
\usepackage{xcolor}
\usepackage{graphicx}
\usepackage{tikz}
\usetikzlibrary{calc,arrows.meta,positioning,shadows.blur}

\definecolor{PCPurple}{RGB}{92,67,156}
\definecolor{PCPurpleLight}{RGB}{247,244,253}
\definecolor{PCBlue}{RGB}{59,116,200}
\definecolor{PCBlueLight}{RGB}{241,247,255}
\definecolor{PCRed}{RGB}{211,82,68}
\definecolor{PCRedLight}{RGB}{252,240,238}
\definecolor{PCGreen}{RGB}{52,150,95}
\definecolor{PCGreenLight}{RGB}{240,250,244}
\definecolor{PCGold}{RGB}{196,139,26}
\definecolor{SoftGray}{RGB}{248,248,250}
\definecolor{Ink}{RGB}{45,45,52}

\tikzset{
  outerpanel/.style={
    rounded corners=5pt,
    draw=black!14,
    fill=white,
    line width=.45pt,
    blur shadow={
      shadow blur steps=4,
      shadow xshift=.7pt,
      shadow yshift=-.7pt,
      shadow opacity=9
    }
  },
  scenebox/.style={
    rounded corners=4pt,
    draw=black!10,
    fill=SoftGray,
    line width=.35pt
  },
  ax/.style={
    -{Latex[length=2.2mm,width=1.55mm]},
    line width=.72pt,
    draw=black!70
  },
  grid/.style={
    draw=black!18,
    line width=.30pt
  },
  floor/.style={
    fill=black!02,
    draw=black!22,
    line width=.38pt
  },
  priorline/.style={
    draw=PCPurple!55,
    dashed,
    line width=.55pt
  },
  projection/.style={
    draw=black!24,
    dashed,
    line width=.32pt
  },
  icouple/.style={
    draw=PCBlue!85!black,
    line width=.82pt
  },
  kcouple/.style={
    draw=PCRed!85!black,
    line width=.82pt
  },
  fullcouple/.style={
    draw=PCGold!88!black,
    line width=1.02pt
  },
  topdot/.style={
    circle,
    draw=white,
    line width=.25pt,
    inner sep=1.45pt
  },
  floordot/.style={
    circle,
    draw=white,
    line width=.22pt,
    inner sep=1.05pt
  },
  smallbox/.style={
    rounded corners=2pt,
    draw=black!12,
    fill=white,
    line width=.32pt,
    inner sep=2pt
  },
  note/.style={
    font=\tiny,
    text=Ink!68
  }
}
\usepackage{amsmath}
\usepackage{amssymb}
\usepackage{mathtools}
\usepackage{amsthm}

\usepackage{algorithm}
\usepackage{algpseudocode} 
\usepackage{multirow} 
\usepackage{svg}

\theoremstyle{plain}
\newtheorem{theorem}{Theorem}[section]

\newtheorem{lemma}[theorem]{Lemma}

\theoremstyle{definition}

\newtheorem{assumption}[theorem]{Assumption}
\theoremstyle{remark}

\usepackage{wrapfig}
\usepackage{caption}
\title{Learning Coordinated Preference for Multi-Objective Multi-Agent Reinforcement Learning}

\author{%
  Pengxin Wang$^{1}$, Lihao Guo$^{1}$, Yi Xie$^{1}$, Bo Liu$^{1}$, Siyang Cao$^{1}$, Jingdi Chen$^{1}$ \\
  $^{1}$Department of Electrical and Computer Engineering, University of Arizona
}

\begin{document}

\maketitle

\begin{abstract}
Cooperative multi-objective multi-agent reinforcement learning (MOMARL) models team decision making under multiple, potentially conflicting objectives. In this setting, conflicts arise not only across objectives but also across agents with different observations, roles, and contributions. We propose Preference Coordinated Multi-agent Policy Optimization (PCMA), which learns coordinated agent-specific preferences to enable complementary trade-offs among agents. Theoretically, we formulate cooperative MOMARL as a team-optimal equilibrium problem, and show that, under suitable conditions, preference diversity yields a first-order improvement in the team objective. Experiments on multiple cooperative MOMA environments and a practical traffic-control scenario show that PCMA improves both performance and trade-off coordination.
\end{abstract}
\section{Introduction}
\label{sec:intro}

Multi-objective Multi-agent reinforcement learning (MOMARL) provides a natural framework for modeling complex decision making systems where multiple agents must coordinate under competing objectives. Existing applications on domains such as traffic signal control, dynamic traffic management, and natural resource allocation show that such systems often require agents to balance competing objectives such as efficiency, safety, energy consumption, and fairness \citep{MOMA_real1, MOMA_real2, MOMA_real3}. Therefore, modeling such problems with a single scalar reward is often inadequate \cite{ScalarRewardNotEnough}. At the same time, conflict structure is extremely intricate in MOMA problem: conflicts may arise not only between objectives within each agent, but also between agents under the same objective and across agents with different objective priorities. Therefore, MOMARL is both practically important and theoretically challenging, calling for learning algorithms that model and coordinate such intertwined objectives in Multi-agent systems.

Despite growing interest, research on MOMARL remains at an early stage. Several notable efforts have begun to structure the field, including the utility based analysis of Multi-objective Multi-agent decision making~\cite{MOMADecisionMaking} and the introduction of MOMALand~\cite{momaland}, as the first benchmark environments in this field. A straight-forward approach adopted by existing works is to optimize scalarized objective using the same preference vector across all agents. However, enforcing identical objective emphasis for every agent can limit the effectiveness of coordination. When all agents prioritize objectives in the same way, they may compete along the same dimensions of the reward. For example, consider two autonomous vehicles passing through an unsignalized intersection. If both vehicles strongly prefer efficiency, they may aggressively enter the intersection and cause a collision. If both vehicles strongly prefer safety, they may both wait and waste time. A better joint behavior requires coordinated preferences: one vehicle may emphasize yielding and safety, while the other emphasizes passing efficiency.

\begin{figure*}[t]
\centering
\resizebox{0.98\textwidth}{!}{%
\begin{tikzpicture}[font=\scriptsize, line cap=round, line join=round,
    paretofront/.style={draw=PCPurple, line width=1.1pt, line cap=round},
    paretofeasible/.style={fill=PCPurple!4, draw=none},
    paretocontinuum/.style={fill=PCPurple!40, draw=none},
    paretodot/.style={circle, inner sep=0pt, minimum size=4.6pt,
                       draw=black!55, line width=.35pt},
    agenthalo/.style={circle, line width=1.5pt, opacity=.32,
                       inner sep=0pt, minimum size=13pt},
    agentdot/.style={circle, draw=black!50, line width=.5pt,
                       inner sep=.4pt, minimum size=9pt,
                       font=\tiny\bfseries, text=white},
    sharedring/.style={draw=PCPurple!55, line width=1pt, opacity=.55,
                       rounded corners=.28cm},
    droppole/.style={draw=black!42, line width=.55pt,
                       dash pattern=on 1.4pt off 1.4pt},
    prefarrow/.style={-{Stealth[length=4.5pt, width=3.6pt]}, line width=1pt},
    floorline/.style={draw=black!22, line width=.4pt},
    axline/.style={-{Stealth[length=4.5pt, width=3.4pt]},
                       draw=black!55, line width=.55pt},
    miniarrow/.style={->, >={Stealth[length=2.5pt, width=2pt]},
                       draw=black!50, line width=.45pt},
]

\path[use as bounding box] (-.05,-.08) rectangle (14.25,4.42);
\draw[outerpanel] (0,0) rectangle (14.15,4.30);

\node[anchor=west, font=\scriptsize\bfseries, text=PCPurple]
  at (.28,4.08) {Preferences as points on the multi-objective Pareto front};

\draw[scenebox] (.28,.85) rectangle (6.95,3.85);

\node[anchor=base west, font=\scriptsize\bfseries, text=PCPurple]
  at (.48,3.62) {Shared preference};

\begin{scope}[shift={(1.05,1.05)}]
  \fill[Ink!4]      (0,0) -- (4,0) -- (5.5,1.05) -- (1.5,1.05) -- cycle;
  \draw[floorline]  (0,0) -- (4,0) -- (5.5,1.05) -- (1.5,1.05) -- cycle;
  \foreach \t in {1,2,3}{
      \draw[floorline] (\t,0) -- ($(\t,0)+(1.5,1.05)$);
  }
  \foreach \t in {.33,.67}{
      \draw[floorline] ($(0,0)+\t*(1.5,1.05)$) -- ($(4,0)+\t*(1.5,1.05)$);
  }
  \fill[paretofeasible]
      (0,0) -- (3.5,0) .. controls (3.78,.10) and (3.80,.18) ..
      (3.71,.23) .. controls (3.74,.34) and (3.72,.40) ..
      (3.66,.44) .. controls (3.55,.54) and (3.45,.59) ..
      (3.36,.62) .. controls (3.10,.71) and (2.95,.74) ..
      (2.83,.76) .. controls (2.55,.82) and (2.30,.85) ..
      (2.12,.85) .. controls (1.78,.87) and (1.50,.88) ..
      (1.25,.88) -- (0,.88) -- cycle;

  \draw[axline] (0,0) -- (4.45,0);
  \draw[axline] (0,0) -- (1.68,1.18);
  \draw[axline] (0,0) -- (0,2.50); 
  \node[font=\tiny, text=Ink!75, anchor=west]        at (4.50,0)    {$R^1$};
  \node[font=\tiny, text=Ink!75, anchor=south west] at (1.68,1.18) {$R^2$};

  \draw[paretofront] plot[smooth, tension=.55] coordinates
      {(3.5,0) (3.71,.23) (3.66,.44) (3.36,.62) (2.83,.76) (2.12,.85) (1.25,.88)};
  \foreach \x/\y in {3.63/.11, 3.54/.53, 3.05/.70, 1.70/.87}{
      \fill[paretocontinuum] (\x,\y) circle[radius=.028];
  }
  \node[font=\tiny\itshape, text=PCPurple, anchor=west]
      at (3.85,.18) {Pareto front};

  \coordinate (ach) at (3.36,.62);
  \draw[prefarrow, draw=PCPurple!75] (0,0) -- ($(ach)+(-.10,-.06)$);
  \node[font=\tiny, text=PCPurple!85] at (1.55,.45) {$w$};
  \node[paretodot, fill=PCGold] at (ach) {};

  \def\altone{.50}\def\alttwo{1.00}\def\altthree{1.50}
  \draw[droppole] (ach) -- ($(ach)+(0,1.75)$);
  \draw[sharedring]
      ($(ach)+(-.28,\altone-.28)$) rectangle ($(ach)+(.28,\altthree+.28)$);

  \node[agenthalo, draw=PCRed!85]   at ($(ach)+(0,\altone)$)   {};
  \node[agentdot,  fill=PCRed!88]   at ($(ach)+(0,\altone)$)   {1};
  \node[agenthalo, draw=PCBlue!85]  at ($(ach)+(0,\alttwo)$)   {};
  \node[agentdot,  fill=PCBlue!88]  at ($(ach)+(0,\alttwo)$)   {2};
  \node[agenthalo, draw=PCGreen!85] at ($(ach)+(0,\altthree)$) {};
  \node[agentdot,  fill=PCGreen!88] at ($(ach)+(0,\altthree)$) {3};
\end{scope}

\draw[scenebox] (7.20,.85) rectangle (13.87,3.85);

\node[anchor=base west, font=\scriptsize\bfseries, text=PCBlue!80!black]
  at (7.40,3.62) {Coordinated preferences};

\begin{scope}[shift={(7.97,1.05)}]
  \fill[Ink!4]      (0,0) -- (4,0) -- (5.5,1.05) -- (1.5,1.05) -- cycle;
  \draw[floorline]  (0,0) -- (4,0) -- (5.5,1.05) -- (1.5,1.05) -- cycle;
  \foreach \t in {1,2,3}{
      \draw[floorline] (\t,0) -- ($(\t,0)+(1.5,1.05)$);
  }
  \foreach \t in {.33,.67}{
      \draw[floorline] ($(0,0)+\t*(1.5,1.05)$) -- ($(4,0)+\t*(1.5,1.05)$);
  }
  \fill[paretofeasible]
      (0,0) -- (3.5,0) .. controls (3.78,.10) and (3.80,.18) ..
      (3.71,.23) .. controls (3.74,.34) and (3.72,.40) ..
      (3.66,.44) .. controls (3.55,.54) and (3.45,.59) ..
      (3.36,.62) .. controls (3.10,.71) and (2.95,.74) ..
      (2.83,.76) .. controls (2.55,.82) and (2.30,.85) ..
      (2.12,.85) .. controls (1.78,.87) and (1.50,.88) ..
      (1.25,.88) -- (0,.88) -- cycle;

  \draw[axline] (0,0) -- (4.45,0);
  \draw[axline] (0,0) -- (1.68,1.18);
  \draw[axline] (0,0) -- (0,2.50); 
  \node[font=\tiny, text=Ink!75, anchor=west]        at (4.50,0)    {$R^1$};
  \node[font=\tiny, text=Ink!75, anchor=south west] at (1.68,1.18) {$R^2$};

  \draw[paretofront] plot[smooth, tension=.55] coordinates
      {(3.5,0) (3.71,.23) (3.66,.44) (3.36,.62) (2.83,.76) (2.12,.85) (1.25,.88)};
  \foreach \x/\y in {3.54/.53, 3.05/.70, 2.55/.82, 1.70/.87}{
      \fill[paretocontinuum] (\x,\y) circle[radius=.028];
  }
  \node[font=\tiny\itshape, text=PCPurple, anchor=west]
      at (3.85,.18) {Pareto front};

  \coordinate (pA) at (3.71,.23);
  \coordinate (pB) at (3.36,.62);
  \coordinate (pC) at (2.12,.85);

  \draw[prefarrow, draw=PCRed!85]   (0,0) -- ($(pA)+(-.10,-.05)$);
  \draw[prefarrow, draw=PCBlue!85]  (0,0) -- ($(pB)+(-.10,-.06)$);
  \draw[prefarrow, draw=PCGreen!85] (0,0) -- ($(pC)+(-.10,-.06)$);
  \node[font=\tiny, text=PCRed!88!black]   at (2.05,.04) {$w_1$};
  \node[font=\tiny, text=PCBlue!88!black]  at (2.20,.34) {$w_2$};
  \node[font=\tiny, text=PCGreen!88!black] at (1.10,.62) {$w_3$};

  \node[paretodot, fill=PCRed!88]   at (pA) {};
  \node[paretodot, fill=PCBlue!88]  at (pB) {};
  \node[paretodot, fill=PCGreen!88] at (pC) {};

  \def\altone{.50}\def\alttwo{1.00}\def\altthree{1.50}
  \draw[droppole] (pA) -- ($(pA)+(0,\altone+.10)$);
  \draw[droppole] (pB) -- ($(pB)+(0,\alttwo+.10)$);
  \draw[droppole] (pC) -- ($(pC)+(0,\altthree+.10)$);

  \draw[line width=3pt, draw=PCGold!28, opacity=.55, line cap=round]
      plot[smooth, tension=.55] coordinates
      {(3.71,0.85) (3.65,1.25) (3.36,1.70) (2.83,2.10) (2.12,2.45)};
  \draw[line width=1pt, draw=PCGold!82!black, opacity=.95, line cap=round,
        -{Stealth[length=5.5pt, width=4.5pt]}]
      plot[smooth, tension=.55] coordinates
      {(3.71,0.85) (3.65,1.25) (3.36,1.70) (2.83,2.10) (2.30,2.37)};

  \node[agenthalo, draw=PCRed!85]   at ($(pA)+(0,\altone)$)   {};
  \node[agentdot,  fill=PCRed!88]   at ($(pA)+(0,\altone)$)   {1};
  \node[agenthalo, draw=PCBlue!85]  at ($(pB)+(0,\alttwo)$)   {};
  \node[agentdot,  fill=PCBlue!88]  at ($(pB)+(0,\alttwo)$)   {2};
  \node[agenthalo, draw=PCGreen!85] at ($(pC)+(0,\altthree)$) {};
  \node[agentdot,  fill=PCGreen!88] at ($(pC)+(0,\altthree)$) {3};
\end{scope}

\draw[rounded corners=4pt, draw=black!12, fill=SoftGray, line width=.36pt]
  (.42,.12) rectangle (13.72,.70);

\begin{scope}[shift={(.85,.40)}]
    \draw[draw=PCPurple, line width=.65pt, line cap=round]
        plot[smooth, tension=.55] coordinates
        {(.55,-.16) (.50,0) (.30,.13) (-.04,.18)};
    \fill[PCGold]                              (.30,.13) circle[radius=.05];
    \draw[draw=black!50, line width=.3pt]      (.30,.13) circle[radius=.05];
\end{scope}
\node[anchor=west, font=\tiny\bfseries, text=Ink!75] at (1.55,.52)
    {Single agent {\normalfont\itshape (MO only)}};
\node[anchor=west, font=\tiny, text=Ink!72] at (1.55,.30)
    {one $w$ $\to$ one Pareto point};

\draw[draw=black!18, line width=.3pt] (5.05,.22) -- (5.05,.60);

\begin{scope}[shift={(5.30,.40)}]
    \draw[miniarrow] (-.05,0) -- (.55,0);
    \fill[Ink!50]                              (.30,0)   circle[radius=.05];
    \draw[draw=black!50, line width=.3pt]      (.30,0)   circle[radius=.05];
\end{scope}
\node[anchor=west, font=\tiny\bfseries, text=Ink!75] at (6.00,.52)
    {Single objective {\normalfont\itshape (MA only)}};
\node[anchor=west, font=\tiny, text=Ink!72] at (6.00,.30)
    {no front, no trade-off};

\draw[draw=black!18, line width=.3pt] (9.50,.22) -- (9.50,.60);

\begin{scope}[shift={(9.75,.40)}]
    \draw[draw=PCPurple, line width=.65pt, line cap=round]
        plot[smooth, tension=.55] coordinates
        {(.55,-.16) (.50,0) (.30,.13) (-.04,.18)};
    \draw[draw=PCGold!75, line width=.7pt, opacity=.85, line cap=round,
          -{Stealth[length=2.5pt, width=2pt]}]
        plot[smooth, tension=.5] coordinates
        {(.50,0) (.30,.13) (-.06,.185)};
    \fill[PCRed!88]                            (.50,0)    circle[radius=.05];
    \fill[PCBlue!88]                           (.30,.13)  circle[radius=.05];
    \fill[PCGreen!88]                          (-.04,.18) circle[radius=.05];
\end{scope}
\node[anchor=west, font=\tiny\bfseries, text=PCBlue!85!black] at (10.45,.52)
    {MA $+$ MO with coordination};
\node[anchor=west, font=\tiny, text=Ink!72] at (10.45,.30)
    {distinct points along the front};

\end{tikzpicture}%
}
\caption{
Illustration of preference coordination. Agents' policies are projected on to the multi-objective return space $(R^1, R^2)$. A preference $w$ from the origin selects one point on the front. \textbf{Left:} all agents use the same preference vector $w$, which could cause conflicts. \textbf{Right:} distinct $w_1, w_2, w_3$ project to three different points along the front and the team jointly covers the front. A single agent can only occupy one point, a single objective produces no front at all, and only multi-agent $+$ multi-objective with coordination spreads agents along the front.}
\label{fig:illustration_intro}
\end{figure*}

This motivates our central idea: rather than forcing all agents to follow the same trade-off, we allow agents to adapt their preferences through a team-level coordination mechanism. From a multi-objective perspective~\cite{MORL_survey1, MORL_survey2}, each agent's policy can be viewed as inducing a point in a shared objective space. By conditioning policies on different preferences, agents can approach different regions of the Pareto front. Coordinating these preferences enables the multi-agent system to assign diverse but complementary roles, thereby reducing behavioral conflicts and improving team-level performance, as illustrated in  Figure.~\ref{fig:illustration_intro}.

To realize this idea, in this paper, we propose Preference Coordinated Multi-agent Policy Optimization (PCMA) under the centralized training with decentralized execution (CTDE) framework~\cite{MADDPG}. We model preference as a latent coordination variable and train a stochastic planner for each agent to adapt its preference based on team improvement and its local observation, and then each agent executes a policy conditioned on its sampled preference. To encourage meaningful specialization, we further regularize the learned preference distributions so that agents avoid collapsing to the same preference direction. Our contributions are as follows:
\begin{itemize}
    \item We formulate cooperative MOMARL as a team-optimal equilibrium problem, where the goal is to learn a coordinated preference profile whose induced preference-conditioned equilibrium maximizes the team objective.

    \item We provide a theoretical analysis of preference coordination, including a first-order team improvement decomposition and a local equilibrium-tracking result for preference-conditioned games.

    \item We propose PCMA, a practical policy-based method with stochastic preference planners and preference-conditioned actors, and provide empirical evaluations on selected cooperative tasks from MPE, SMAC, and MOMAland, as well as a traffic-control validation scenario.
\end{itemize}

\section{Related Work}
\label{sec:related}
Current work on cooperative Multi-objective Multi-agent reinforcement learning (MOMARL) remains at an early stage of exploration. As a pioneer, work on Multi-objective games has put forward solution concepts such as Pareto Nash
equilibrium, which extends Nash equilibrium to vector valued utilities by requiring Pareto optimality under unilateral deviations~\cite{PNG_in_MOG1,PNG_in_MOG2,PNG_in_MOG4,MOMARL_PSC}. However, Pareto Nash equilibrium can be ill-posed as an optimization criterion for cooperative MOMARL. Since a unilateral deviation is excluded only when it improves at least one objective without decreasing any other objective, many policies that are suboptimal under meaningful trade offs may still satisfy the Pareto Nash condition. We provide an illustrative example in Appendix~\ref{app:pareto-nash}, where any joint policies satisfy the Pareto Nash condition, while only a small subset corresponds to meaningful team level trade offs.

More recently, MOMALand~\cite{momaland} considers linear scalarization based approaches to convert the vector valued team reward into a scalar reward, and then train policies by standard MARL algorithms such as MAPPO \cite{MAPPO}. However, they need re-train a separate policy for new preference vector, making them impractical for desired trade off during execution. Another work, MoMix~\cite{momix}, extends the value decomposition principle of QMIX \cite{QMIX} to vector valued rewards. While this provides an important step toward Multi-objective credit assignment, it relies on strong individual global max (IGM) assumptions and can't be applied to continous action space.

\section{Preliminaries and Problem Formulation}
\label{sec:problem}

This section introduces the background setting and problem formulation. We first define the cooperative Multi-objective Dec-POMDP in Subsection~\ref{sec:cooperative_momarl}, and then formulate MOMARL as a team optimal equilibrium problem in Subsection~\ref{sec:preference_coordination_objective}.

\subsection{Multi-objective Dec-POMDP}
\label{sec:cooperative_momarl}

We consider a Multi-objective Decentralized Partially Observable Markov Decision Process (Dec-POMDP)~\cite{MOMADecisionMaking, momaland, momix}, where \(N\) agents interact in a shared environment defined by
\[
\mathcal{G}
=
\left\langle
N,
\mathcal{S},
\{\mathcal{A}_i\}_{i=1}^{N},
P,
\Omega,
O,
r_{\mathrm{team}},
\{\mathbf r_i\}_{i=1}^{N},
\gamma
\right\rangle .
\]
Here, \(\mathcal{S}\) is the state space, \(\{\mathcal{A}_i\}_{i=1}^{N}\) are the action spaces, \(\mathbf a=(a_1,\ldots,a_N)\) is the joint action, and \(\gamma\) is the discount factor. The environment evolves according to \(s_{t+1}\sim P(\cdot\mid s_t,\mathbf a_t)\). Each agent receives a local observation \(o_i\), and the joint observation \(\mathbf o=(o_1,\ldots,o_N)\in\Omega\) is generated by \(O(\mathbf o\mid s)\).

The reward consists of a sparse team reward \(r_{\mathrm{team}}(s,\mathbf a)\in\mathbb R\) and agent-specific guiding vector rewards \(\{\mathbf r_i(s,\mathbf a)\}_{i=1}^{N}\), where \(\mathbf r_i(s,\mathbf a)\in\mathbb R^K\). The team reward is scalar, since the team task should be the same to all agents and all objectives. It is shared by all agents and measures task completion, such as \(+1\) for team winning. The guiding vector reward provides auxiliary Multi-objective feedback for each agent, such as efficiency, cost, or risk.

\subsection{MOMARL as Team Optimal Equilibrium.}
\label{sec:preference_coordination_objective}
Let \(\theta_i\) parameterize agent \(i\)'s policy, and \(\mathbf\theta=(\theta_1,\ldots,\theta_N)\). We define the shared team objective and the agent-specific vector objective as
\[
J_{\mathrm{team}}(\mathbf\theta)
:=
\mathbb E_{\pi_{\mathbf\theta}}
\left[
\sum_{t=0}^{T}\gamma^t r_{\mathrm{team}}(s_t,\mathbf a_t)
\right],
\qquad
\mathbf J_i(\mathbf\theta)
:=
\mathbb E_{\pi_{\mathbf\theta}}
\left[
\sum_{t=0}^{T}\gamma^t \mathbf r_i(s_t,\mathbf a_t)
\right].
\]
Given a preference profile \(\mathbf p (p_1,\ldots,p_N)\), agent \(i\)'s payoff is defined as
\[
U_i(\mathbf\theta;p_i)
:=
J_{\mathrm{team}}(\mathbf\theta)
+
p_i^\top \mathbf J_i(\mathbf\theta).
\]
For a fixed \(\mathbf p\), this induces a preference-conditioned stochastic game $\mathcal{G}(\mathbf{p})$. The solution can be defined as a Nash equilibrium~\cite{nash, fink}, denoted by
\(\mathbf\theta^{*}(\mathbf p)\), where no agent can improve its payoff \(U_i\) by unilaterally changing its own policy while the other agents' policies remain fixed.

However, an equilibrium under a fixed preference profile does not necessarily maximize the final team objective. The key problem is therefore to coordinate
the preference profile so that the induced equilibrium achieves high team performance. Motivated by the social-welfare view of evaluating collective outcomes~\cite{game1} and the optimal-equilibrium perspective in team Markov games~\cite{game2}, we formulate the goal as finding a preference profile \(\mathbf p\) and an induced equilibrium \(\mathbf\theta\) that maximize the team objective \(J_{\mathrm{team}}\).

\section{Theoretical Analysis}
\label{sec:theoretical_analysis}

We now provide a theoretical explanation for coordinated preference learning. First, we characterize how a policy update affects the team objective and show that preference diversity can contribute a positive first-order improvement. Second, we show that equilibria under different preference profiles are connected through a locally smooth stationary path, which implies that preference optimization can proceed gradually from an initial profile instead of treating each preference-conditioned game as an isolated problem.

\subsection{First-order team improvement and preference diversity}
\label{sec:first_order_team_improvement}
In this subsection, we ask how the choice of preference profile \((p_1, \cdots, p_N)\) affects the resulting team improvement. Starting from the current policy parameters \(\mathbf\theta=(\theta_1,\ldots,\theta_N)\), each agent performs one preference-conditioned gradient update
\[
\theta_{i,\mathrm{new}}
=
\theta_i
+
\eta
\nabla_{\theta_i}
U_i(\mathbf\theta;p_i),
\qquad i=1,\ldots,N ,
\]
We can define the following team-improvement matrix \(B\in\mathbb R^{N\times K}\) by
\[
\label{def:b_ik}
B_{i,k}
:=
\left(
\nabla_{\theta_i}J_{\mathrm{team}}(\mathbf\theta)
\right)^\top
\nabla_{\theta_i}J_{i,k}(\mathbf\theta),
\qquad
i=1,\ldots,N,\; k=1,\ldots,K .
\]
Here, \(B_{i,k}\) measures the first-order contribution of agent \(i\)'s \(k\)-th individual objective to the team objective. Let \(\bar p=N^{-1}\sum_{i=1}^{N}p_i\) and \(\bar b=N^{-1}\sum_{i=1}^{N}b_i\) be the averages across agents, and let \(\tilde p_i=p_i-\bar p\) and \(\tilde b_i=b_i-\bar b\) be their centered counterparts.
We assume that the centered first-order contribution of each agent has a lower-bounded projection on its preference direction.

\begin{assumption}[Preference-improvement alignment]
\label{ass:preference_improvement_alignment}
There exists \(\kappa>0\) such that, for every agent \(i\) with \(\tilde p_i\neq 0\), the projection of \(\tilde b_i\) onto \(\tilde p_i\) is lower bounded by \(\kappa\), i.e.,
\[
\frac{\tilde p_i^\top \tilde b_i}{\|\tilde p_i\|_2^2}
\ge
\kappa,
\qquad i=1,\ldots,N .
\]
\end{assumption}

By first-order Taylor expansion, we can derive the following decomposition
\begin{theorem}[Team Improvement Decomposition]
\label{theo:team_improvement_decomposition}
For sufficiently small update step $\eta$, the first-order team improvement satisfies
\[
J_{\mathrm{team}}(\mathbf\theta_{\mathrm{new}})
-
J_{\mathrm{team}}(\mathbf\theta)
\ge
\eta
\sum_{i=1}^{N}
\left\|
\nabla_{\theta_i}J_{\mathrm{team}}(\mathbf\theta)
\right\|_2^2
+
\eta N
\left(
\bar p^\top \bar b
+
\kappa \mathcal D_p
\right)
\]
where
\[
\mathcal D_p
:=
\frac{1}{2N^2}
\sum_{i=1}^{N}
\sum_{j=1}^{N}
\|p_i-p_j\|_2^2 .
\]
is the pairwise preference distance across agents.
\end{theorem}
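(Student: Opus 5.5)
Our plan is to linearize $J_{\mathrm{team}}$ around $\mathbf\theta$, read off the first-order term, and then split that term into a mean part and a centered part.

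\textbf{Step 1 (linearization).} The joint update is $\Delta\theta_i=\eta\nabla_{\theta_i}U_i(\mathbf\theta;p_i)$. Since $U_i=J_{\mathrm{team}}+p_i^\top\mathbf J_i$, this equals
\[
\Delta\theta_i=\eta\left(\nabla_{\theta_i}J_{\mathrm{team}}+\sum_{k}p_{i,k}\nabla_{\theta_i}J_{i,k}\right).
\]
A first-order Taylor expansion then gives
\[
J_{\mathrm{team}}(\mathbf\theta_{\mathrm{new}})-J_{\mathrm{team}}(\mathbf\theta)=\sum_i(\nabla_{\theta_i}J_{\mathrm{team}})^\top\Delta\theta_i+O(\eta^2).
\]
Substituting, and writing $b_i\in\mathbb R^K$ for the $i$-th row of $B$ (definition \eqref{def:b_ik}), the linear term becomes
\[
\eta\sum_i\|\nabla_{\theta_i}J_{\mathrm{team}}\|_2^2+\eta\sum_i p_i^\top b_i.
\]
The statement is a first-order claim ("for sufficiently small $\eta$"), so we interpret the inequality up to $O(\eta^2)$, equivalently for the linear part. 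We will say explicitly that the remainder is absorbed under smoothness, e.g.\ bounded Hessian of $J_{\mathrm{team}}$.

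\textbf{Step 2 (covariance split).} Substituting $p_i=\bar p+\tilde p_i$ and $b_i=\bar b+\tilde b_i$, and using $\sum_i\tilde p_i=\sum_i\tilde b_i=0$, the cross terms vanish. This leaves
\[
\sum_i p_i^\top b_i=N\bar p^\top\bar b+\sum_i\tilde p_i^\top\tilde b_i.
\]

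\textbf{Step 3 (apply Assumption~\ref{ass:preference_improvement_alignment}).} For each $i$ with $\tilde p_i\neq0$, the assumption gives $\tilde p_i^\top\tilde b_i\ge\kappa\|\tilde p_i\|_2^2$. The same bound holds trivially when $\tilde p_i=0$. Summing over agents,
\[
\sum_i\tilde p_i^\top\tilde b_i\ge\kappa\sum_i\|\tilde p_i\|_2^2.
\]

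\textbf{Step 4 (pairwise identity).} Expanding the double sum and again using $\sum_i\tilde p_i=0$,
\[
\sum_{i,j}\|p_i-p_j\|_2^2=\sum_{i,j}\|\tilde p_i-\tilde p_j\|_2^2=2N\sum_i\|\tilde p_i\|_2^2.
\]
Therefore $\sum_i\|\tilde p_i\|_2^2=N\mathcal D_p$.

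\textbf{Conclusion.} Combining Steps 1--4, the first-order improvement is at least
\[
\eta\sum_i\|\nabla_{\theta_i}J_{\mathrm{team}}\|_2^2+\eta N\left(\bar p^\top\bar b+\kappa\mathcal D_p\right),
\]
which is the claimed bound.

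\textbf{Main obstacle.} The algebra is routine. The delicate point is the Taylor remainder: the inequality as stated contains no $O(\eta^2)$ term. There are two ways to make it rigorous. The first is to state the result as holding for the first-order (linearized) improvement. The second is to note that if the linear term is strictly positive, then for small enough $\eta$ it dominates the remainder, so the bound holds up to a factor $(1-\epsilon)$; recovering the exact constant requires reading the bound at first order. We will try the first reading first, matching the phrase "first-order team improvement" in the statement.
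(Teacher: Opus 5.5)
Your proposal is correct and follows essentially the same route as the paper. Both arguments go through the same four steps: a first-order Taylor expansion giving $\eta\sum_i\|\nabla_{\theta_i}J_{\mathrm{team}}\|_2^2+\eta\sum_i p_i^\top b_i$, the centered split into $N\bar p^\top\bar b+\sum_i\tilde p_i^\top\tilde b_i$, the alignment assumption, and the identity $\sum_i\|\tilde p_i\|_2^2=N\mathcal D_p$.

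The paper also handles the remainder as in your first reading: it carries an $o(\eta)$ term and then drops it ``under a sufficiently small step size,'' so the stated inequality is indeed meant only at first order.
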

This theorem decomposes the first-order team improvement into direct team-gradient improvement, average preference alignment, and diversity-induced improvement. The first term is non-negative but can be noisy under sparse team rewards. The term \(\eta N\bar p^\top \bar b\) captures average alignment, while \(\eta N\kappa\mathcal D_p\) shows that aligned preference diversity yields additional positive team improvement.

\subsection{Equilibrium Tracking}
For each preference profile \(\mathbf p\), there is a corresponding
preference-conditioned game \(\mathcal G(\mathbf p)\). A natural question is how the solutions of these games are related as \(\mathbf p\) changes. In this subsection, we first show that, under regularity conditions, the local stationary equilibrium changes continuously with respect to \(\mathbf p\). Then, when the preference profile varies slowly, gradient updates can track the corresponding moving equilibrium path.

By applying the implicit function theorem \cite{IFT} to the local nash condition~\cite{DNE}, we obtain the following local continuity result. The full proof is provided in Appendix~\ref{pf:continual_equilibrium}.
\begin{lemma}[Continuity of preference-conditioned stationary solutions]
\label{lem:local_continuity_stationary_solution}
Suppose that \(\mathbf\theta^{*}\) is a local Nash equilibrium of
\(\mathcal G(\mathbf p)\), i.e.,
\[
\nabla_{\theta_i}U_i(\mathbf\theta^{*};p_i)=0,
\qquad i=1,\ldots,N .
\]
Assume that the joint gradient
\((\nabla_{\theta_1}U_1,\ldots,\nabla_{\theta_N}U_N)\) is continuously differentiable near \((\mathbf\theta^{*},\mathbf p)\) with non-singular Jacobian. Then there exists a neighborhood \(\mathcal U\) of \(\mathbf p\) and a unique continuously differentiable mapping \(\mathbf\theta(\cdot):\mathcal U\to\Theta\) such that \(\mathbf\theta(\mathbf p)=\mathbf\theta^{*}\) and, for all \(\tilde{\mathbf p}\in\mathcal U\),
\[
\nabla_{\theta_i}U_i(\mathbf\theta(\tilde{\mathbf p});\tilde p_i)=0,
\qquad i=1,\ldots,N .
\]
\end{lemma}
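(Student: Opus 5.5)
The plan is a direct application of the implicit function theorem to the stacked first-order conditions. Define the map
\[
F(\mathbf\theta,\mathbf p)
:=
\bigl(\nabla_{\theta_1}U_1(\mathbf\theta;p_1),\ldots,\nabla_{\theta_N}U_N(\mathbf\theta;p_N)\bigr),
\]
which goes from $\Theta\times\mathbb R^{NK}$ to $\mathbb R^{d}$, where $d=\sum_i\dim\theta_i$ equals the dimension of $\Theta$. The local Nash condition at $\mathbf\theta^*$ is exactly $F(\mathbf\theta^*,\mathbf p)=0$. The stationary solutions of $\mathcal G(\tilde{\mathbf p})$ are then the zeros of $F(\cdot,\tilde{\mathbf p})$, so the lemma reduces to solving $F(\mathbf\theta,\tilde{\mathbf p})=0$ for $\mathbf\theta$ as a function of $\tilde{\mathbf p}$ near $(\mathbf\theta^*,\mathbf p)$.

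\textbf{Step 1: check the hypotheses of the implicit function theorem.}
\begin{itemize}
\item \emph{Smoothness.} $F$ is assumed $C^1$ near $(\mathbf\theta^*,\mathbf p)$. Dependence on $\mathbf p$ is also explicit and affine, since
\[
\nabla_{\theta_i}U_i=\nabla_{\theta_i}J_{\mathrm{team}}+\sum_k p_{i,k}\nabla_{\theta_i}J_{i,k}.
\]
Hence $\partial F/\partial\mathbf p$ exists and is continuous whenever the $\nabla_{\theta_i}J_{i,k}$ are continuous.
\item \emph{Invertibility.} The "non-singular Jacobian" hypothesis will be read as non-singularity of the partial Jacobian $D_{\mathbf\theta}F(\mathbf\theta^*,\mathbf p)$. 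This is the square $d\times d$ game Jacobian whose $(i,j)$ block is $\nabla^2_{\theta_j\theta_i}U_i$, as in the differential Nash framework of \cite{DNE}.
\end{itemize}
The dimensional match between the equations and the unknowns $\mathbf\theta$ makes this the right square matrix to invert.

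\textbf{Step 2: apply the theorem.} The implicit function theorem \cite{IFT} gives neighborhoods $\mathcal U\ni\mathbf p$ and $\mathcal V\ni\mathbf\theta^*$ and a unique $C^1$ map $\mathbf\theta:\mathcal U\to\mathcal V\subset\Theta$. This map satisfies $\mathbf\theta(\mathbf p)=\mathbf\theta^*$ and $F(\mathbf\theta(\tilde{\mathbf p}),\tilde{\mathbf p})=0$ for all $\tilde{\mathbf p}\in\mathcal U$. Written componentwise, this is exactly $\nabla_{\theta_i}U_i(\mathbf\theta(\tilde{\mathbf p});\tilde p_i)=0$.

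Uniqueness holds in the IFT sense: within $\mathcal V$, $\mathbf\theta(\tilde{\mathbf p})$ is the only zero. I will state this explicitly, since global uniqueness is false in general. If needed, shrink $\mathcal U$ so that $\mathbf\theta(\mathcal U)$ lies in the open parameter set $\Theta$.

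\textbf{Step 3: record the derivative.} I will also note
\[
D\mathbf\theta(\tilde{\mathbf p})=-\bigl[D_{\mathbf\theta}F\bigr]^{-1}D_{\mathbf p}F.
\]
This follows by differentiating the identity $F(\mathbf\theta(\tilde{\mathbf p}),\tilde{\mathbf p})=0$, and it is useful for the subsequent tracking result.

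\textbf{Main obstacle.} The only delicate point is interpretive. The lemma says "non-singular Jacobian", which I will make precise as invertibility of the $\mathbf\theta$-block, not of the full Jacobian in $(\mathbf\theta,\mathbf p)$, which is not square. I will also emphasize what the conclusion does and does not give.
\begin{itemize}
\item It yields a continuously varying \emph{stationary} point with vanishing individual gradients.
\item It does not by itself show that $\mathbf\theta(\tilde{\mathbf p})$ remains a local Nash equilibrium. That would additionally require the second-order conditions $\nabla^2_{\theta_i\theta_i}U_i\prec0$ at $\mathbf\theta^*$.
\end{itemize}
If those second-order conditions are assumed, they persist on a possibly smaller neighborhood by continuity of the Hessians, and then $\mathbf\theta(\tilde{\mathbf p})$ is a local (differential) Nash equilibrium as well.
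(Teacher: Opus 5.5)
Your proposal is correct and follows the paper's proof: stack the individual gradients into $F(\mathbf\theta,\mathbf p)$, observe $F(\mathbf\theta^*,\mathbf p)=0$, and apply the implicit function theorem using nonsingularity of the partial Jacobian $\nabla_{\mathbf\theta}F(\mathbf\theta^*,\mathbf p)$, which is how the paper itself reads the hypothesis. Your extra remarks are accurate refinements that the paper leaves implicit: uniqueness holds only within a neighborhood of $\mathbf\theta^*$, the formula for $D\mathbf\theta$ follows from the identity, and without second-order conditions only stationarity, not the local Nash property, carries over to nearby profiles.
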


The previous lemma guarantees the existence of a locally continuous equilibrium region. For learning dynamic analysis, we further impose the following two regularity assumptions.
\begin{assumption}[Attraction property near equilibrium]
\label{ass:attraction}
Let \(\mathbf\theta(\mathbf p)\) denote a local Nash stationary solution under the preference profile \(\mathbf p\). Assume that there exist \(\rho\in(0,1)\) and a neighborhood of \(\mathbf\theta(\mathbf p)\) such that
\[
\|\mathbf\theta_{\mathrm{new}}-\mathbf\theta(\mathbf p)\|
\le
\rho
\|\mathbf\theta_{\mathrm{old}}-\mathbf\theta(\mathbf p)\|,
\]
for all \(\mathbf\theta_{\mathrm{old}}\) in this neighborhood.
\end{assumption}

\begin{assumption}[Local Lipschitz continuity]
\label{ass:lipschitz}
Let \(\mathbf\theta(\cdot):\mathcal U\to\Theta\) be the local stationary solution mapping induced by Lemma~\ref{lem:local_continuity_stationary_solution}. Assume that this mapping is Lipschitz continuous in \(\mathcal U\): there
exists \(C>0\) such that for any two preference profiles
\(\mathbf p,\mathbf p'\in\mathcal U\),
\[
\|\mathbf\theta(\mathbf p')-\mathbf\theta(\mathbf p)\|
\le
C\|\mathbf p'-\mathbf p\|.
\]
\end{assumption}

The following theorem shows that, when the preference profile changes slowly, the policy update remains close to the equilibrium associated with the new preference profile. See proof in Appendix~\ref{pf_tracking}.
\begin{theorem}[Equilibrium tracking]
\label{thm:tracking}
Consider the iterative policy update
\[
\theta_i^{t+1}
=
\theta_i^t
+
\eta
\nabla_{\theta_i}U_i(\theta^t;\mathbf p^t),
\qquad i=1,\ldots,N .
\]
Under Assumption~\ref{ass:attraction} and Assumption~\ref{ass:lipschitz}, the tracking error
$
e_t
:=
\|\theta^t-\theta(\mathbf p^t)\|
$
satisfies
\[
e_{t+1}
\le
\rho e_t
+
C\|\mathbf p^{t+1}-\mathbf p^t\|.
\]
In particular, if
\(\|\mathbf p^{t+1}-\mathbf p^t\|\le\delta\), then
\[
\limsup_{t\to\infty} e_t
\le
\frac{C}{1-\rho}\delta .
\]
\end{theorem}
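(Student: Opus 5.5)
The plan is a triangle inequality argument that inserts the old equilibrium point \(\theta(\mathbf p^t)\). We assume throughout that all iterates \(\theta^t\) stay inside the attraction neighborhood of Assumption~\ref{ass:attraction}. We also assume all profiles \(\mathbf p^t\) lie in the set \(\mathcal U\) where the stationary map \(\theta(\cdot)\) of Lemma~\ref{lem:local_continuity_stationary_solution} is defined and Lipschitz. This is the implicit standing hypothesis of the statement.

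\textbf{Step 1 (the recursion).} Write
\[
e_{t+1}=\|\theta^{t+1}-\theta(\mathbf p^{t+1})\|
\le \|\theta^{t+1}-\theta(\mathbf p^{t})\| + \|\theta(\mathbf p^{t})-\theta(\mathbf p^{t+1})\| .
\]
Since \(\theta^{t+1}\) is obtained from \(\theta^t\) by one gradient step on the game \(\mathcal G(\mathbf p^t)\), Assumption~\ref{ass:attraction} applies. We take \(\mathbf p=\mathbf p^t\), \(\theta_{\mathrm{old}}=\theta^t\) and \(\theta_{\mathrm{new}}=\theta^{t+1}\), which bounds the first term by \(\rho\|\theta^t-\theta(\mathbf p^t)\|=\rho e_t\). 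Assumption~\ref{ass:lipschitz} bounds the second term by \(C\|\mathbf p^{t+1}-\mathbf p^t\|\). Together these give the claimed one-step inequality.

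\textbf{Step 2 (unrolling).} Under \(\|\mathbf p^{t+1}-\mathbf p^t\|\le\delta\), the recursion becomes \(e_{t+1}\le\rho e_t+C\delta\). Induction gives
\[
e_t\le \rho^t e_0 + C\delta\sum_{s=0}^{t-1}\rho^s \le \rho^t e_0+\frac{C\delta}{1-\rho}.
\]
Since \(\rho\in(0,1)\), letting \(t\to\infty\) yields \(\limsup_t e_t\le C\delta/(1-\rho)\).

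\textbf{Main obstacle.} The delicate point is the implicit requirement that every \(\theta^t\) remains in the attraction neighborhood of the \emph{current} equilibrium \(\theta(\mathbf p^t)\). This neighborhood moves with \(t\). I would handle it by assuming a uniform radius \(r\) for the neighborhoods along the path, with \(\rho\) uniform and the \(\mathbf p^t\) remaining in \(\mathcal U\). Then \(e_0\le r\) and \(C\delta\le(1-\rho)r\) together give \(e_t\le\rho r+C\delta\le r\) for all \(t\), by induction. The induction therefore closes, and the invariance is justified alongside the recursion itself.
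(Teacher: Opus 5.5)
Your proposal is correct and follows the paper's proof essentially step for step: the triangle inequality through $\theta(\mathbf p^t)$, the attraction bound $\rho e_t$ plus the Lipschitz bound $C\|\mathbf p^{t+1}-\mathbf p^t\|$, and then unrolling $e_{t+1}\le\rho e_t+C\delta$ into a geometric series. Your extra induction, assuming $e_0\le r$ and $C\delta\le(1-\rho)r$ so that the iterates stay in a uniform-radius attraction neighborhood, states explicitly a standing hypothesis that the paper's proof uses without stating it.
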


\section{Multi-agent Policy Optimization with Preference Coordinated Learning}
\label{sec:methods}

\begin{figure}[t]
    \centering
    \includegraphics[width=1.0\linewidth]{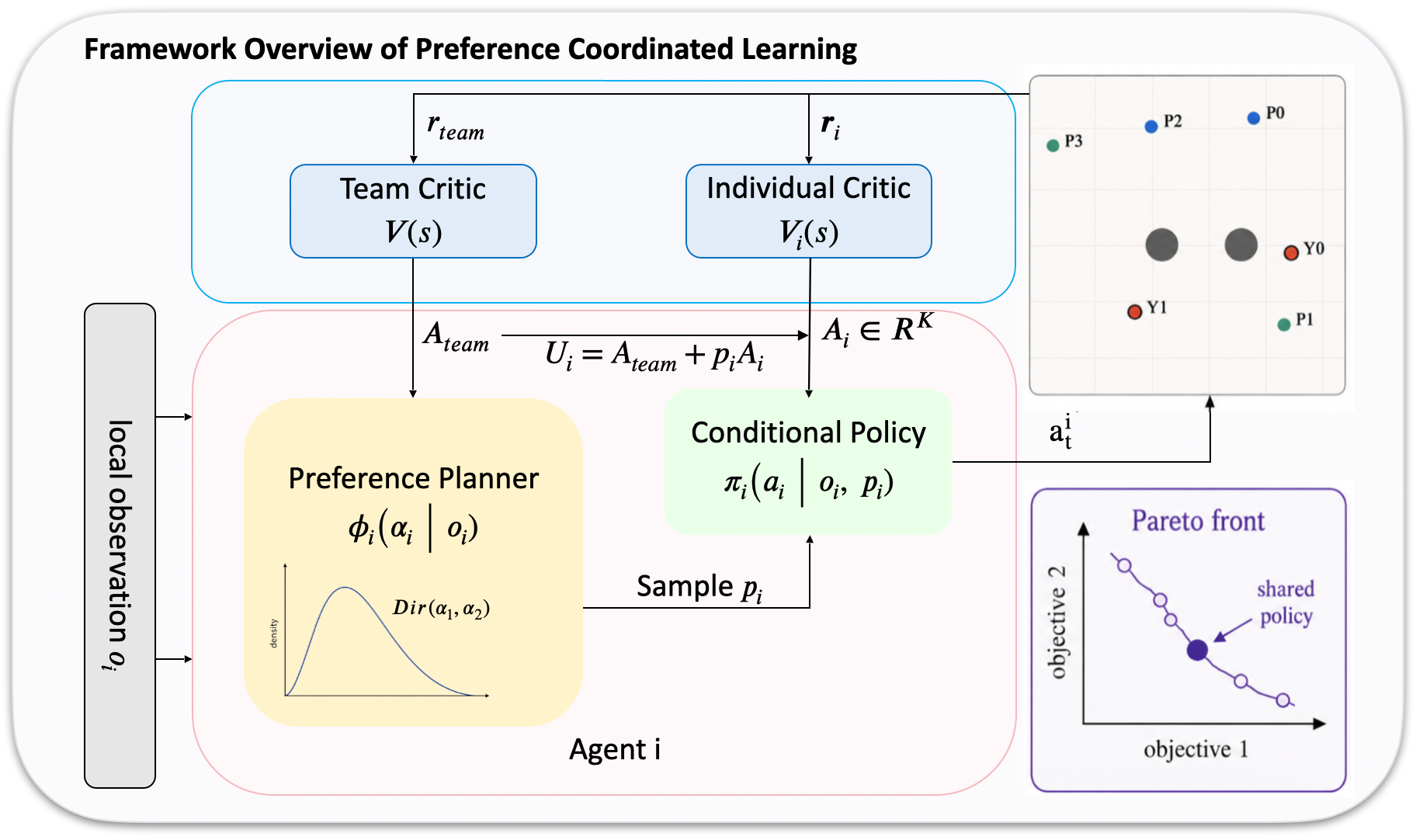}
    \caption{Overview of PCMA. Each agent uses a stochastic preference planner and a preference-conditioned actor. The planner samples preferences from a Dirichlet distribution, and the actor selects actions conditioned on the sampled preference. During training, the team critic provides coordination feedback, while individual vector critics provide dense preference-aligned learning signals.}
    \label{fig:algo-flowchart}
\end{figure}

Motivated by the preceding analysis, we now implement the preference-coordination idea in a practical PPO-based algorithm, called Preference Coordinated Multi-agent Policy Optimization (PCMA). The key design is to treat preference \(p_i\) as a latent coordination variable: it determines the utility optimized by agent \(i\), while the planner adjusts the preference profile to encourage team-beneficial updates. PCMA follows the CTDE \cite{MADDPG} paradigm, using centralized critics for training and decentralized policies for execution.

\subsection{Preference-conditioned Policy Optimization}
\label{sec:preference_conditioned_policy_optimization}

For a fixed preference profile \(\mathbf p=(p_1,\ldots,p_N)\), each agent is updated with respect to its utility $U_i(p_i)$. Following  preference-conditioned policy learning in MORL~\cite{CN,Envelope,PD-MORL}, we use a simple concatenation design, where the preference \(p_i\) is provided as an additional input to the actor. We use a dual-critic design to estimate the corresponding utility advantage: a centralized team critic estimates the sparse team advantage \(A^{\mathrm{team}}\), and individual vector critics estimate the multi-objective guiding advantages \(\mathbf A_i\). The critics are trained by
\begin{equation}
\label{eq:critic_loss}
\mathcal L_{\mathrm{critic}}
=
\mathbb E_{\tau}
\left[
\left(
V^{\mathrm{team}}(\mathbf o)-R^{\mathrm{team}}
\right)^2
+
\sum_{i=1}^{N}
\left\|
\mathbf V^{i}(o_i)-\mathbf R^{i}
\right\|_2^2
\right].
\end{equation}
The actor selects actions by \(a_i\sim\pi_\theta(\cdot\mid o_i,p_i)\) and is optimized with a standard PPO surrogate:
\begin{equation}
\label{eq:actor_loss}
\mathcal L_{\mathrm{actor}}(\theta)
=
\mathcal L_{\mathrm{PPO}}
\left(
\pi_\theta(\cdot\mid o_i,p_i),
A_{U_i}
\right),
\qquad
A_{U_i}
=
A^{\mathrm{team}}
+
\lambda p_i^\top \mathbf A_i^{\mathrm{ind}} .
\end{equation}
Intuitively, \(p_i\) specifies the local trade-off direction along which agent \(i\) improves its policy. For a fixed preference profile $\mathbf{p}$, the agents therefore play a preference-conditioned game. However, an arbitrary profile may lead to poorly coordinated directions, so the induced equilibrium may fail to cover team-beneficial regions of the Pareto front, which motivates us to optimize the preference profile itself.

\subsection{Coordinated Preference Planning}
\label{sec:coordinated_preference_planning}

We then equip each agent with a stochastic preference planner that adapts its own preference based on its local observation. Since preferences lie on the simplex, a natural choice is the Dirichlet distribution. Specifically, the planner outputs \(\alpha_i=\phi_\psi(o_i)\), and samples \(p_i\sim\mathrm{Dir}(\alpha_i)\), where \(p_i\in\Delta^{K-1}\).

Motivated by Theorem~\ref{theo:team_improvement_decomposition}, we encourage the planner to maintain diverse preference distributions across agents. We regularize the expected pairwise diversity of the sampled preferences \(\mathcal D_\alpha:=\mathbb E[\mathcal D_p]\). This regularizer promotes coordinated specialization: different agents can focus on different objective trade-offs, rather than collapsing to the same preference direction. The planner is then trained with the team advantage:
\begin{equation}
\label{eq:planner_loss}
\mathcal L_{\mathrm{plan}}(\psi)
=
\mathcal L_{\mathrm{PPO}}
\left(
\phi_\psi(\cdot\mid o_i),
A^{\mathrm{team}}
\right)
-
\lambda_1 \mathcal D_{\alpha}.
\end{equation}
The PPO term encourages preference distributions that yield higher team advantage. Therefore, our algorithm~\ref{alg:PCMA} not only let agent learn from local utilities, but are guide them towards team-level optimality.

\begin{algorithm}[t]
\caption{Preference Coordinated Multi-agent Policy Optimization(PCMA)}
\label{alg:PCMA}
\begin{algorithmic}[1]
\State Initialize preference planner \(\phi_\psi\), actor \(\pi_\theta\), team critic
\(V^{\mathrm{team}}\), and vector critics \(\{\mathbf V^i\}_{i=1}^N\)
\For{each iteration}
    \For{each environment step}
        \For{each agent \(i=1,\ldots,N\)}
            \State sample local preference \(p_i\sim\mathrm{Dir}(\phi_{\psi}(o_i))\) and Sample action \(a_i\sim\pi_\theta(\cdot\mid o_i,p_i)\)
        \EndFor
        \State Execute joint action \(\mathbf a=(a_1,\ldots,a_N)\) and store
        \((\mathbf o,\mathbf p,\mathbf a,r^{\mathrm{team}},\{\mathbf r^i\}_{i=1}^N,\mathbf o')\)
    \EndFor
    \State Estimate the team advantage \(A^{\mathrm{team}}\) and individual vector advantages \(\{\mathbf A_i^{\mathrm{ind}}\}_{i=1}^N\)
    \State Update \(V^{\mathrm{team}}\) and \(\{\mathbf V^i\}_{i=1}^N\) by minimizing $\mathcal{L}_{\text{critic}}$ defined in Eq.~\eqref{eq:critic_loss}
    \State Compute $A_{U_i}$ and Update \(\pi_\theta\) by minimizing the $\mathcal{L}_{\text{actor}}$ defined in Eq.~\eqref{eq:actor_loss}
    \State Compute preference diversity \(\mathcal D_\alpha\), and update \(\phi_\psi\) by minimizing the $\mathcal{L}_{\text{plan}}$ defined in Eq.~\eqref{eq:planner_loss}
\EndFor
\end{algorithmic}
\end{algorithm}

\section{Experiments}
\label{sec:experiments}
In this section, we present empirical results to validate the effectiveness of our method. We evaluate on a diverse set of cooperative multi-agent environments, including particle-world coordination~\cite{MADDPG}, drone control~\cite{crazy_rl}, walker locomotion~\cite{petting_zoo}, and StarCraft combat~\cite{SMAC}. To better study coordinated preference learning, we modify these environments by separating the sparse team reward from vector-valued individual rewards. This design allows us to test whether our method can organize vectorized local learning signals toward improved team performance. Additional environment descriptions are shown in Appendix~\ref{subsec:env_details}. We first analyze how preference coordination improves multi-agent cooperation, and then report team success rate and mean reward across agents and objectives as the main evaluation metrics in Table~\ref{tab:main_results}.

\subsection{How Does Preference Coordination Improve Cooperation?}
\label{sec:pref_coord_analysis}
In this subsection, we study two representative cases and show the effect of preference coordination.

\paragraph{Preference specialization in particle-world tasks.}
We first analyze preference specialization in MOMPE tasks. In Cooperative Spread, we use objective-specific distance rewards, where each objective encourages agents to approach a different landmark. In Predator-Prey, the objectives correspond to approaching different prey while maintaining safety. With preference coordination, agents specialize toward different objectives rather than collapsing to the same preference vector, leading to a well-covered Pareto front as is shown in Figure~\ref{fig:mompe_pref_analysis}.

\begin{figure}[htbp]
    \centering
    \captionsetup[subfigure]{skip=2pt}
    \begin{subfigure}[htbp]{0.24\linewidth}
        \centering
        \includegraphics[width=\linewidth]{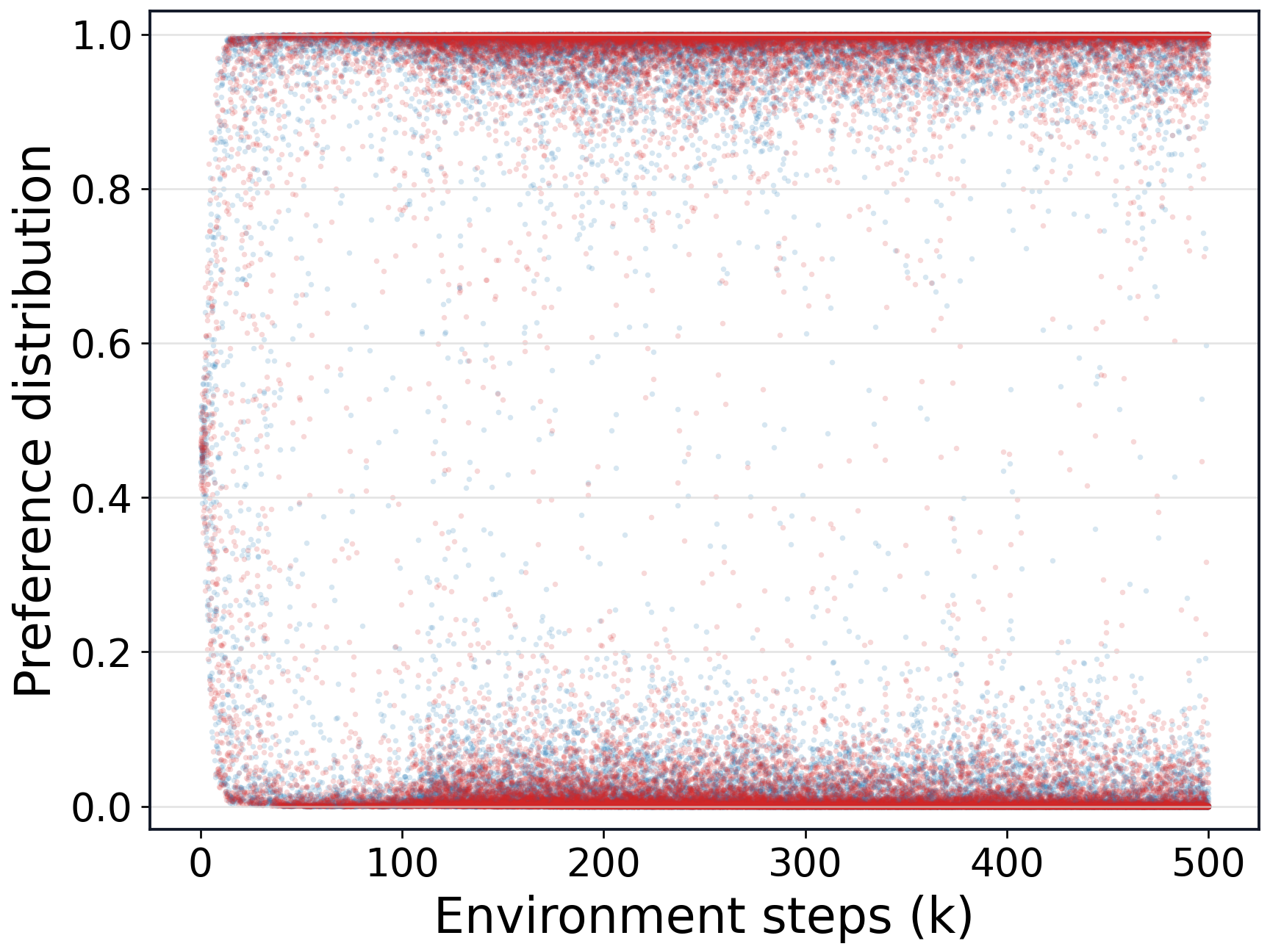}
        \caption{Preferences (Spread).}
        \label{fig:analysis_spread_pref}
    \end{subfigure}
    \hfill
    \begin{subfigure}[htbp]{0.24\linewidth}
        \centering
        \includegraphics[width=\linewidth]{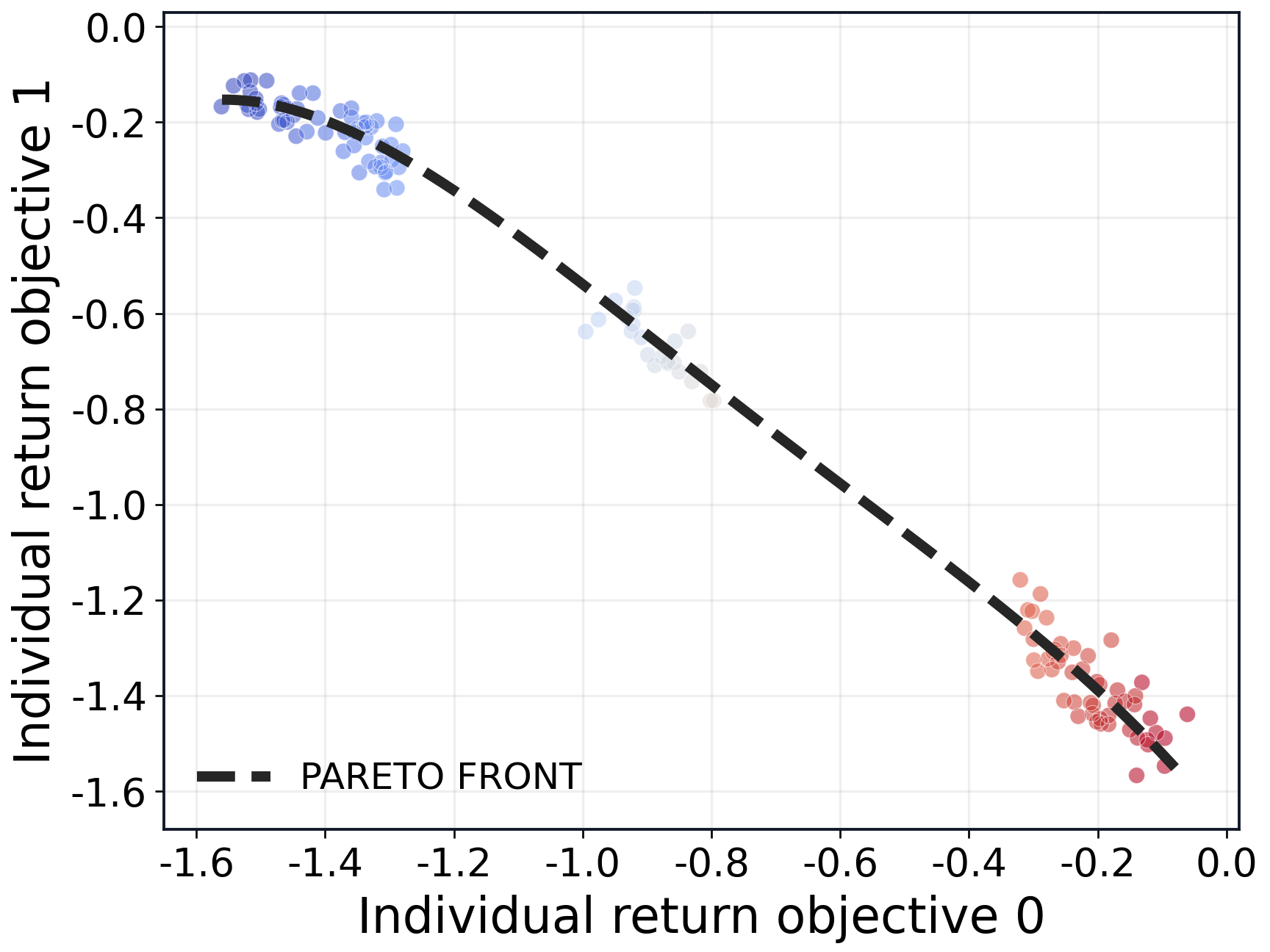}
        \caption{Front (Spread).}
        \label{fig:analysis_spread_pf}
    \end{subfigure}
    \hfill
    \begin{subfigure}[htbp]{0.24\linewidth}
        \centering
        \includegraphics[width=\linewidth]{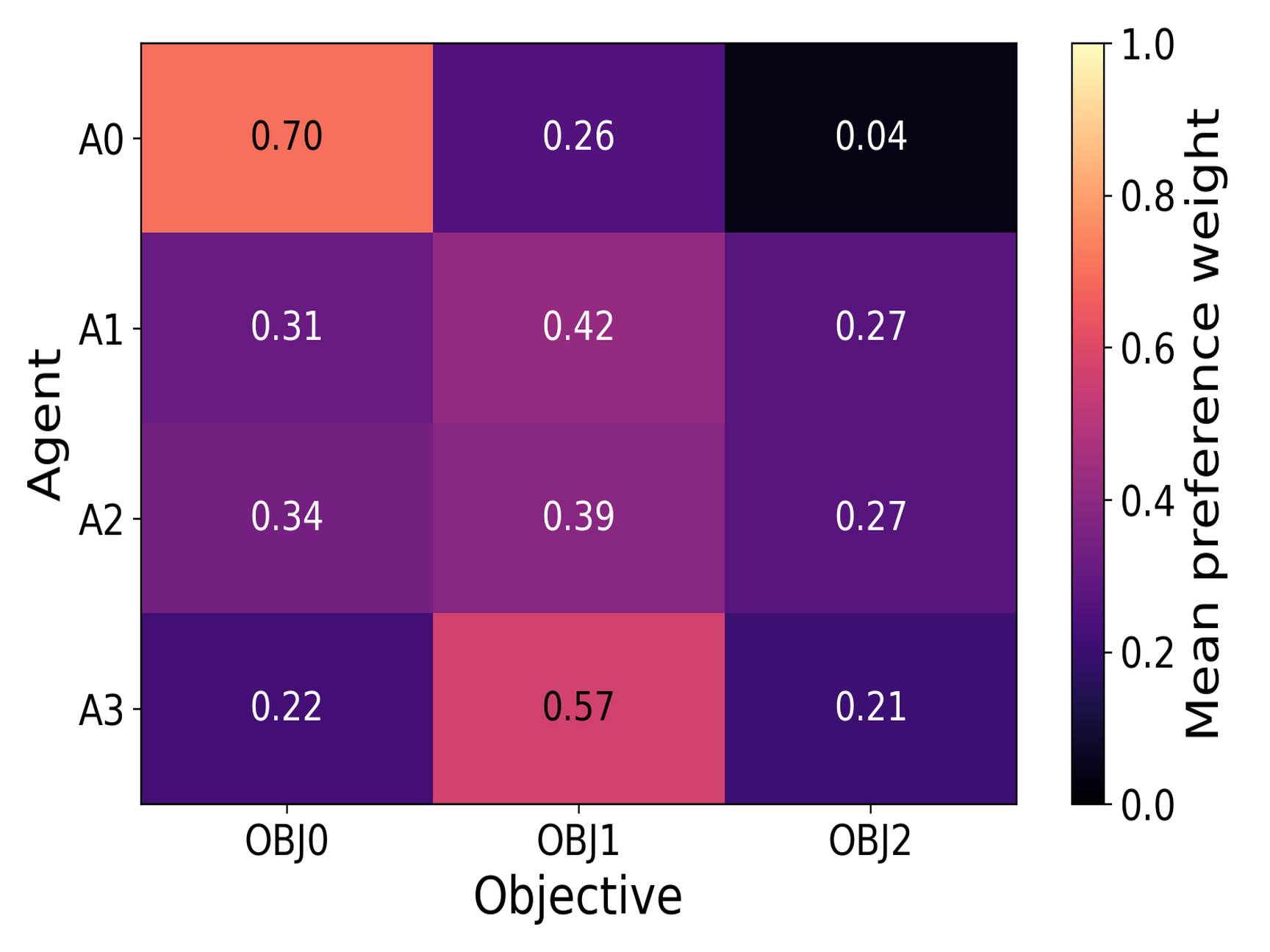}
        \caption{Preferences (Predator).}
        \label{fig:analysis_predator_pref}
    \end{subfigure}
    \hfill
    \begin{subfigure}[htbp]{0.24\linewidth}
        \centering
        \includegraphics[width=\linewidth]{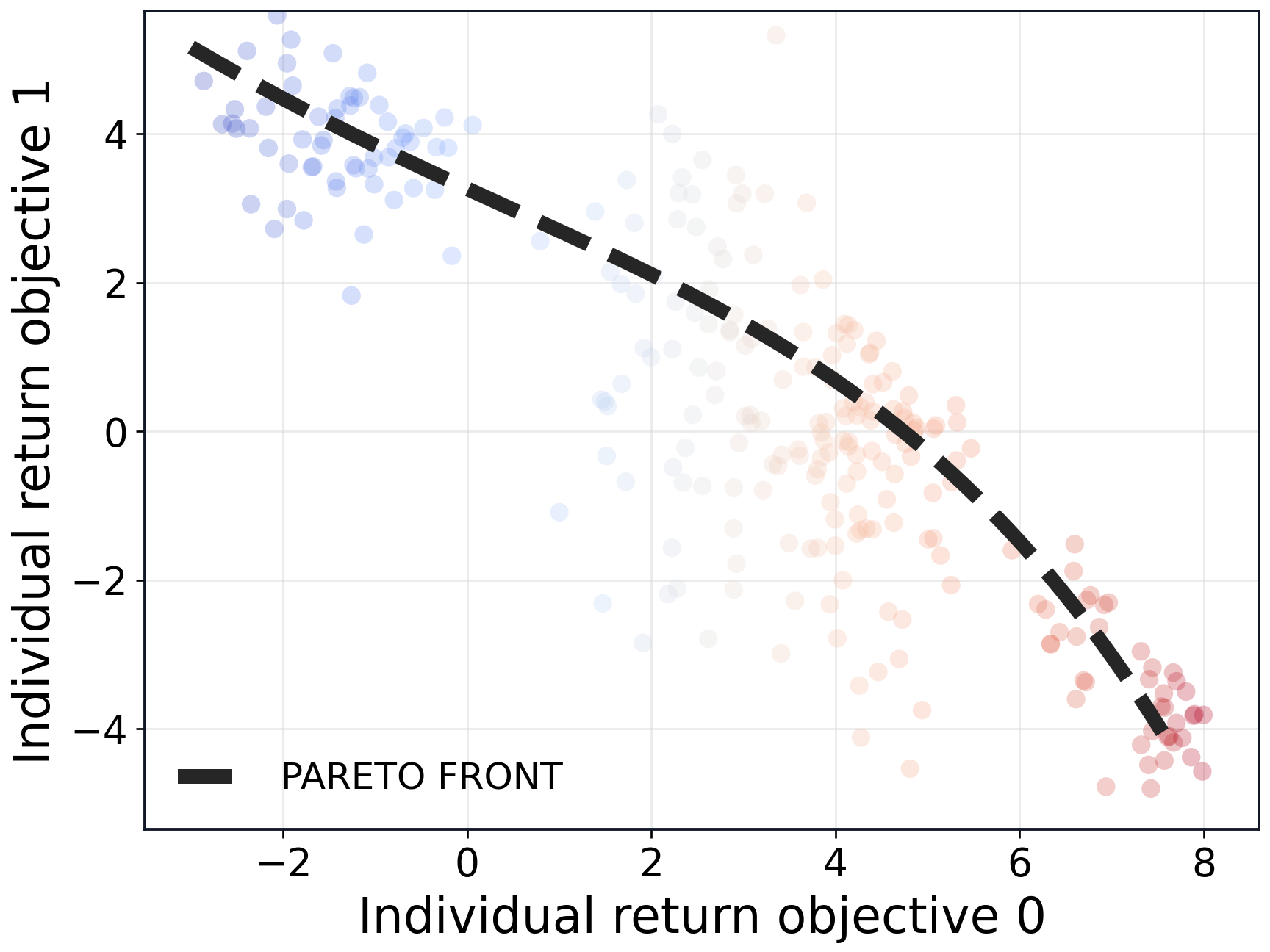}
        \caption{Front (Predator).}
        \label{fig:analysis_predator_pf}
    \end{subfigure}

    \caption{
    Preference coordination analysis on MOMPE tasks. In Cooperative Spread, preferences converge toward different objectives, and rollout returns cover different regions of the approximate Pareto front. In Predator-Prey, two agents focus on two different preys and others keep a balanced preference.
    }
    \label{fig:mompe_pref_analysis}
\end{figure}
\begin{wrapfigure}{r}{0.65\linewidth}
    \centering
    \vspace{-8pt}
    \captionsetup[subfigure]{skip=2pt}

    \begin{subfigure}[t]{0.48\linewidth}
        \centering
        \includegraphics[width=\linewidth]{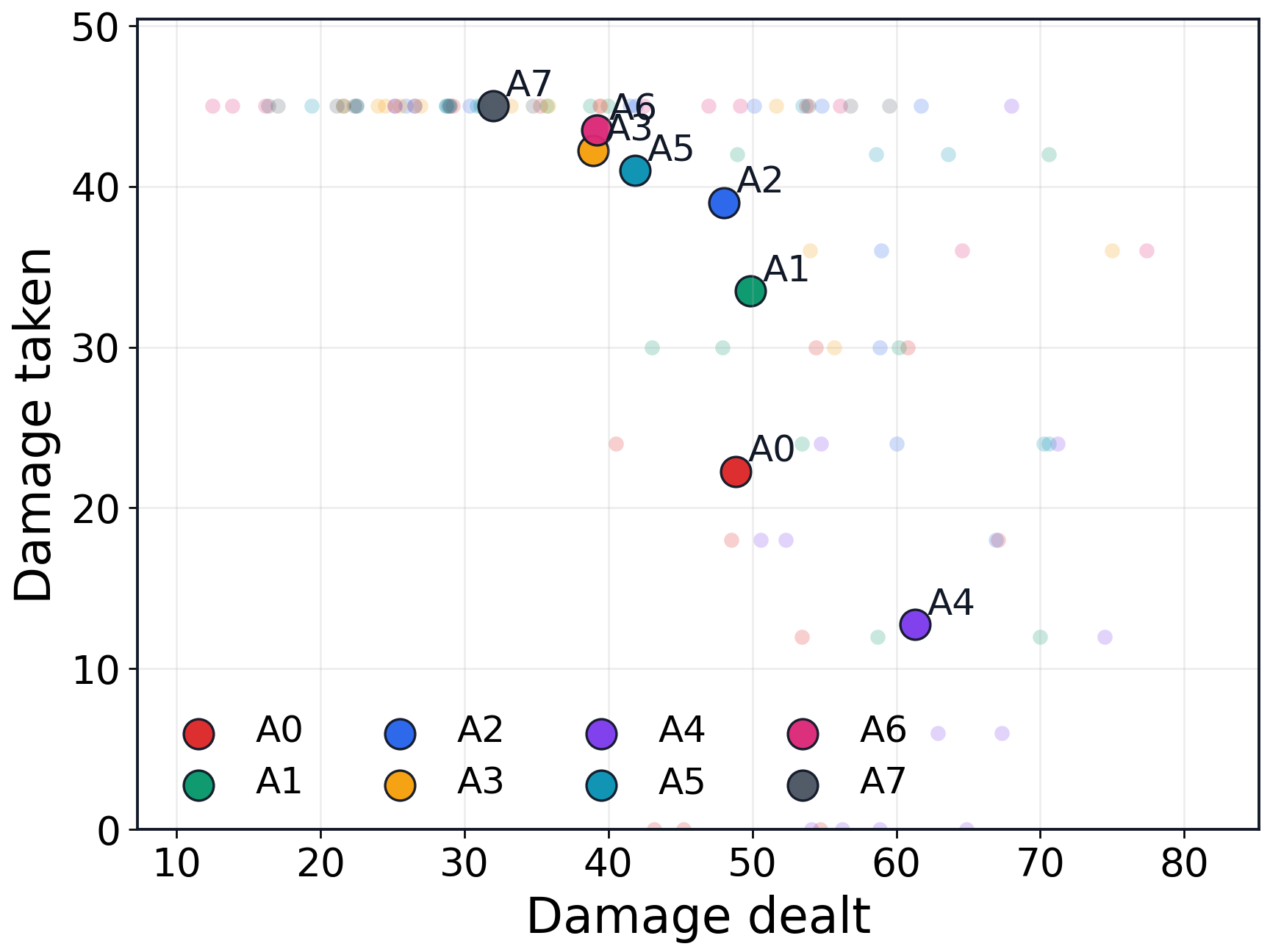}
        \caption{\texttt{8m}.}
        \label{fig:analysis_smac_8m}
    \end{subfigure}
    \hfill
    \begin{subfigure}[t]{0.48\linewidth}
        \centering
        \includegraphics[width=\linewidth]{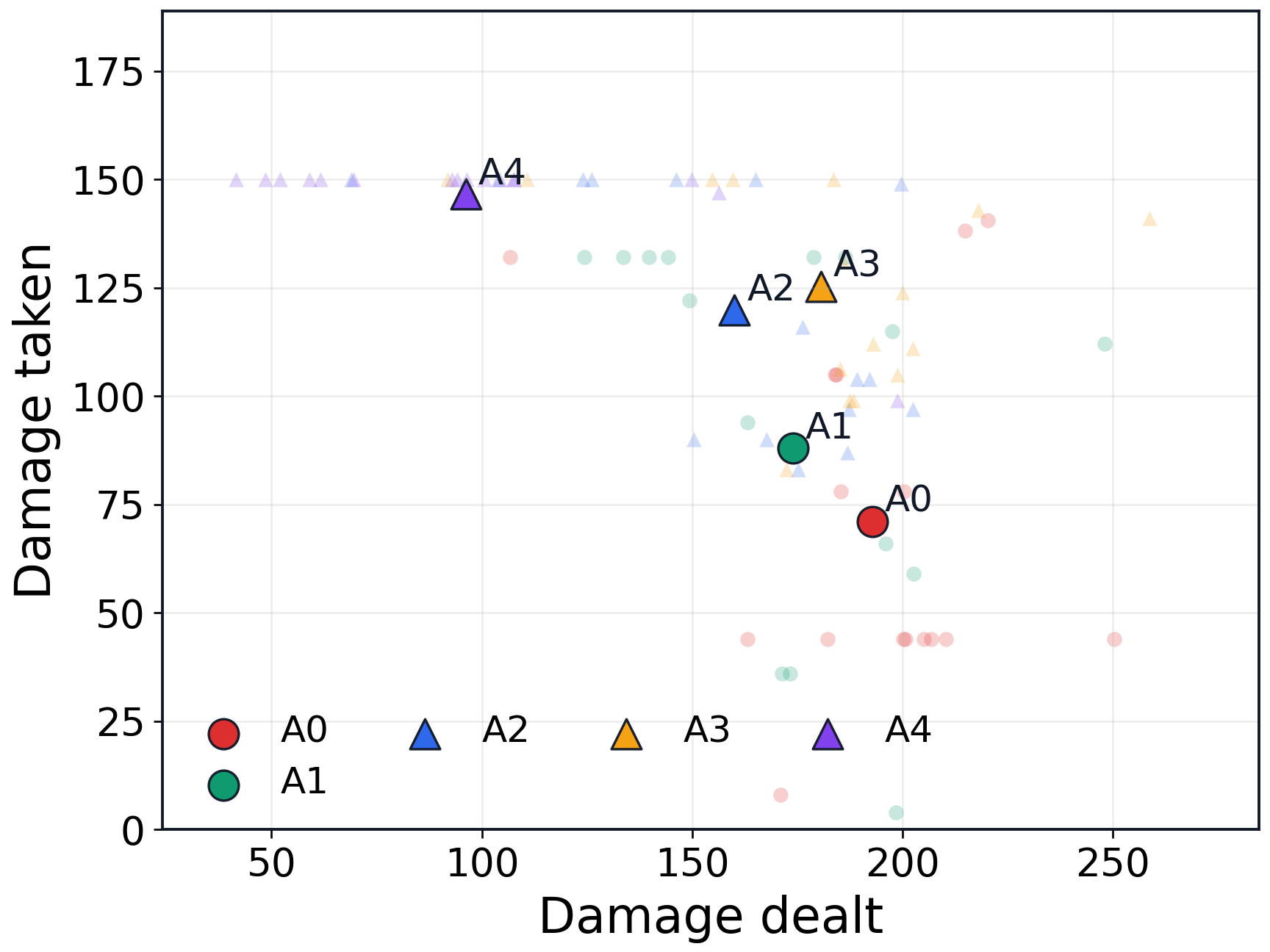}
        \caption{\texttt{2s3z}.}
        \label{fig:analysis_smac_2s3z}
    \end{subfigure}

    \caption{
    Role differentiation in SMAC. Each point represents an agent's damage dealt and damage taken (averaged over episodes).}
    \label{fig:smac_role_analysis}
    \vspace{-10pt}
\end{wrapfigure}
\paragraph{Role specification in SMAC.}

The same effect appears in SMAC. Figure~\ref{fig:smac_role_analysis} measures
each agent by damage dealt and damage taken, which reflect different combat roles such as aggressive attackers or front-line units.

With coordinated preferences, agents spread across the behavior space instead of forming a homogeneous cluster. This indicates that preference coordination induces role differentiation: different agents emphasize different combat trade-offs while still contributing to the same team objective. Thus, coordinated
preference diversity provides a simple mechanism for improving cooperation.

\subsection{Quantitative Performance}
\label{sec:quantitative_results}
We compare PCMA with MADDPG, IPPO, and MAPPO, covering deterministic actor-critic learning, independent policy optimization, and policy optimization shared critic. All baselines use the same fixed scalarization weight \([0.5,0.5]\) for the vector-valued guiding reward. More details are shown in Appendix~\ref{subsec:training_details}. For most environments, we report success rate and average reward as our main metric in Table~\ref{tab:main_results}. For Escort and MOMAwalker, where success rate is not directly defined, we instead report task-specific metrics such as average reward and forward distance. Fig.~\ref{fig:learning_curves} shows the learning curves across the evaluated tasks. PCMA achieves the best or tied-best performance on most metrics, showing consistent improvements across diverse cooperative multi-agent tasks.

\begin{figure*}[htbp] 
    \centering
    \begin{subfigure}[b]{0.24\textwidth}
        \centering
        \includegraphics[width=\linewidth]{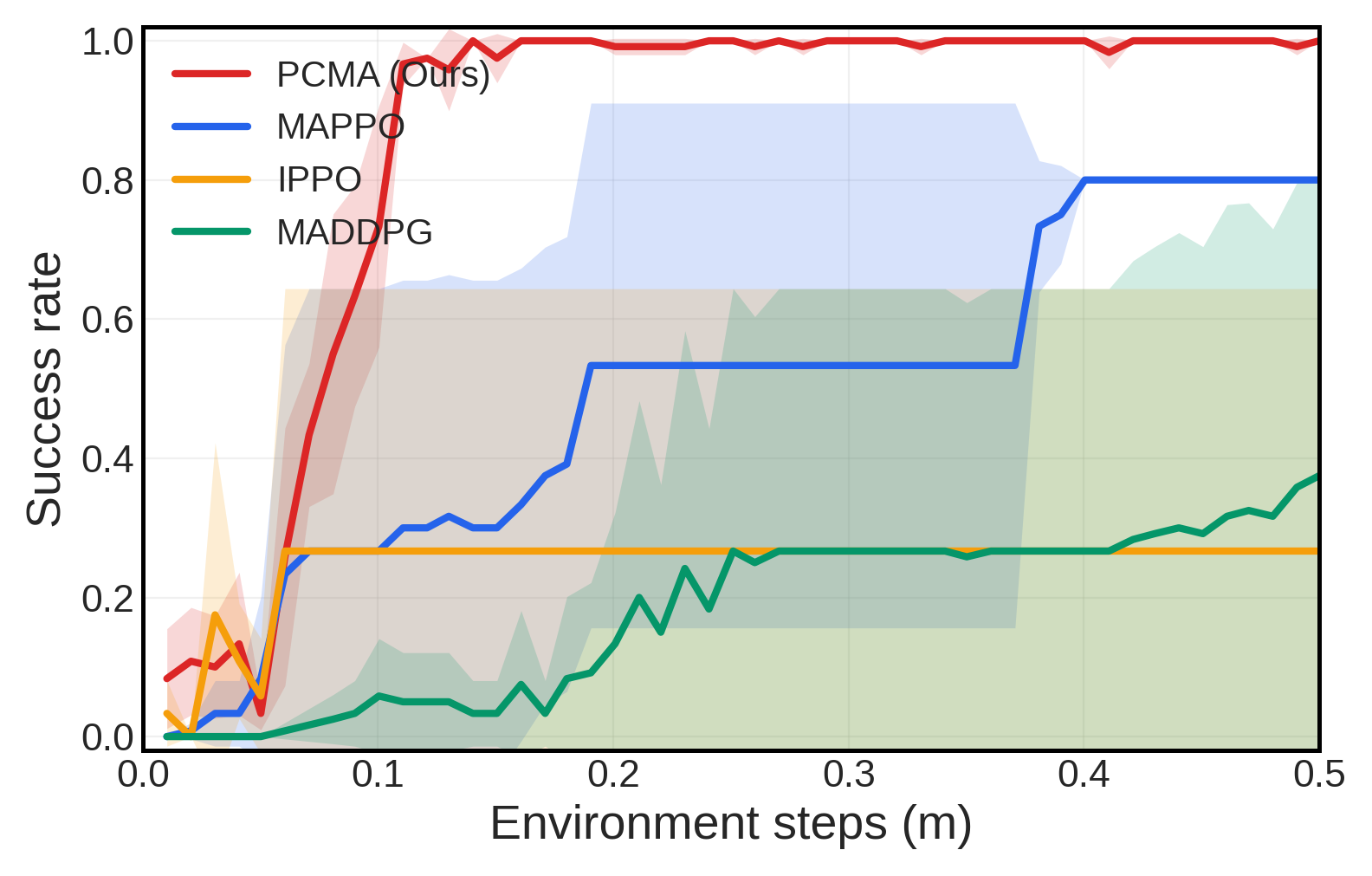}
        \caption{Cooperative Spread}
        \label{fig:results_a}
    \end{subfigure}
    \hfill
    \begin{subfigure}[b]{0.24\textwidth}
        \centering
        \includegraphics[width=\linewidth]{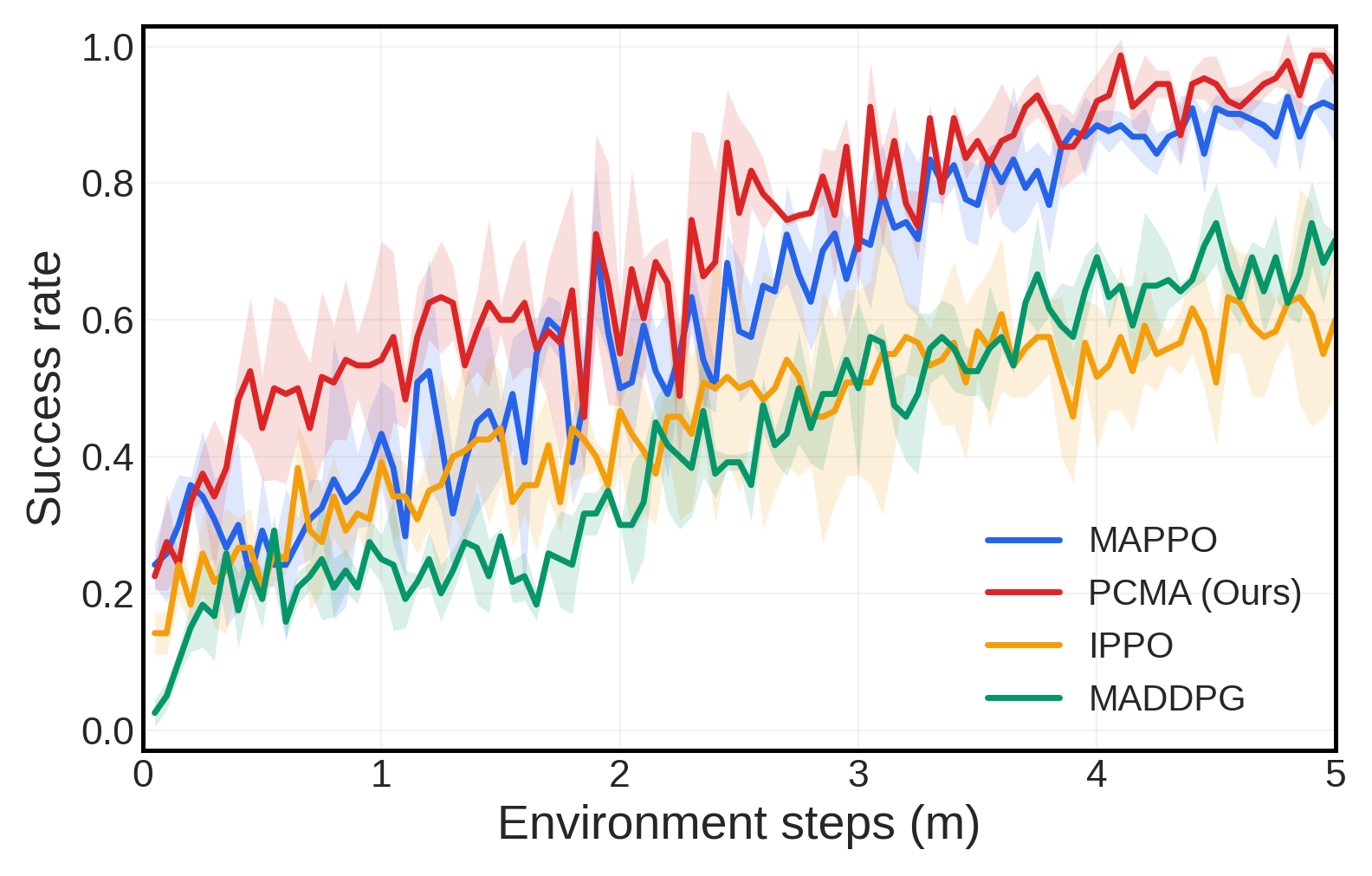}
        \caption{Safe Predator Prey}
        \label{fig:results_b}
    \end{subfigure}
    \hfill
    \begin{subfigure}[b]{0.24\textwidth}
        \centering
        \includegraphics[width=\linewidth]{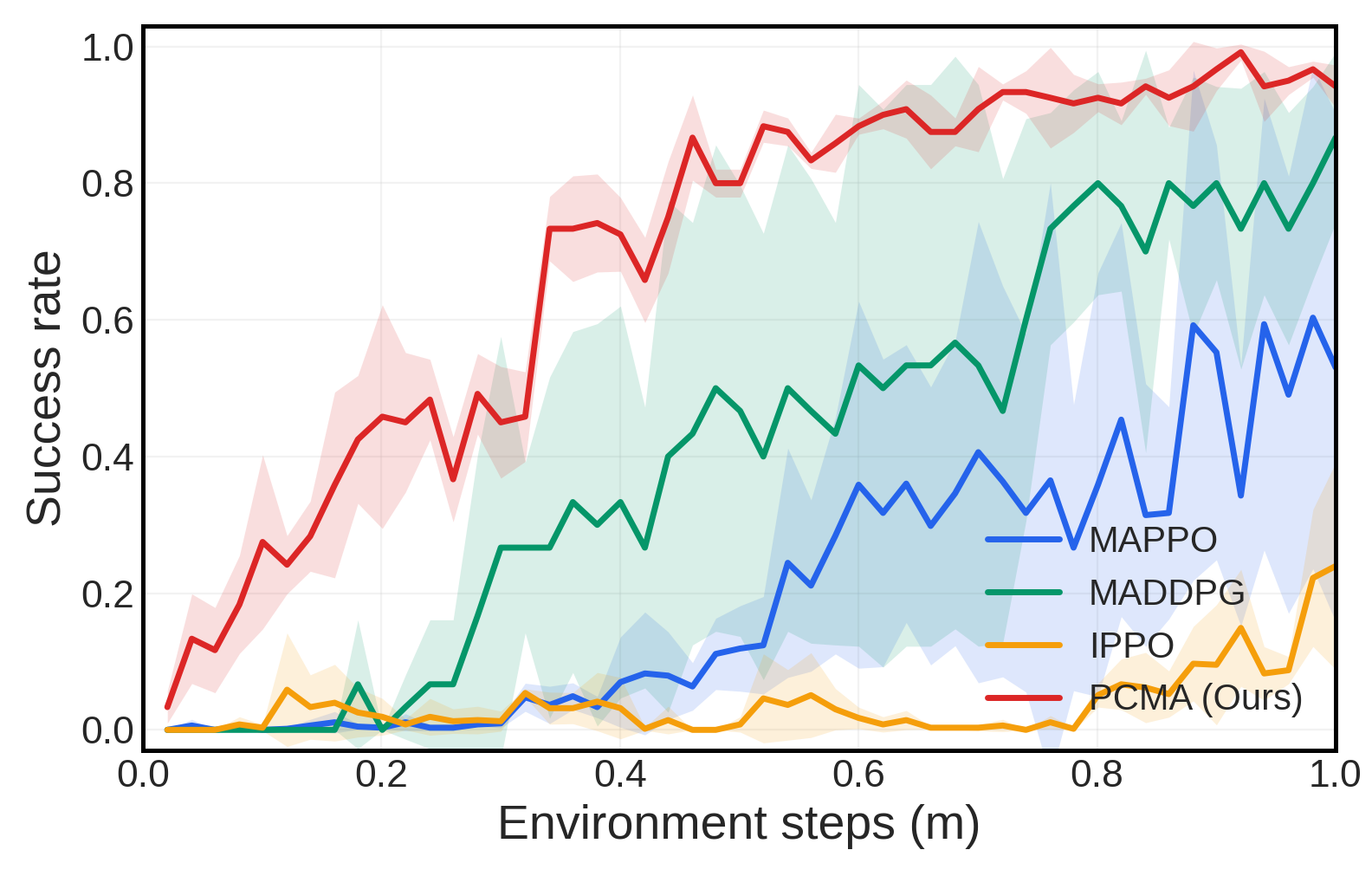}
        \caption{Catch}
        \label{fig:results_c}
    \end{subfigure}
    \hfill
    \begin{subfigure}[b]{0.24\textwidth}
        \centering
        \includegraphics[width=\linewidth]{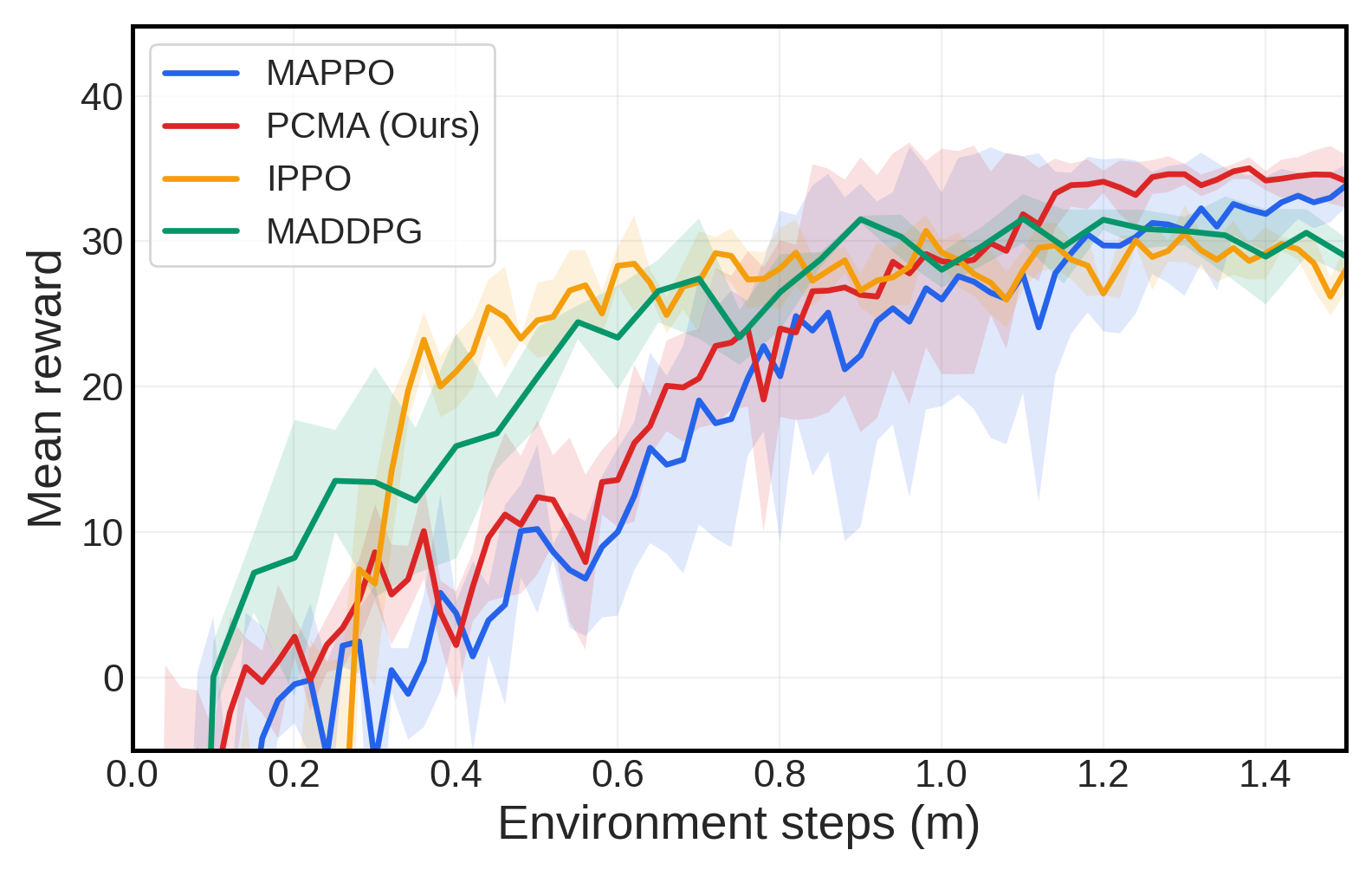}
        \caption{Escort}
        \label{fig:results_d}
    \end{subfigure}

    \vspace{0.8em} 

    \begin{subfigure}[b]{0.24\textwidth}
        \centering
        \includegraphics[width=\linewidth]{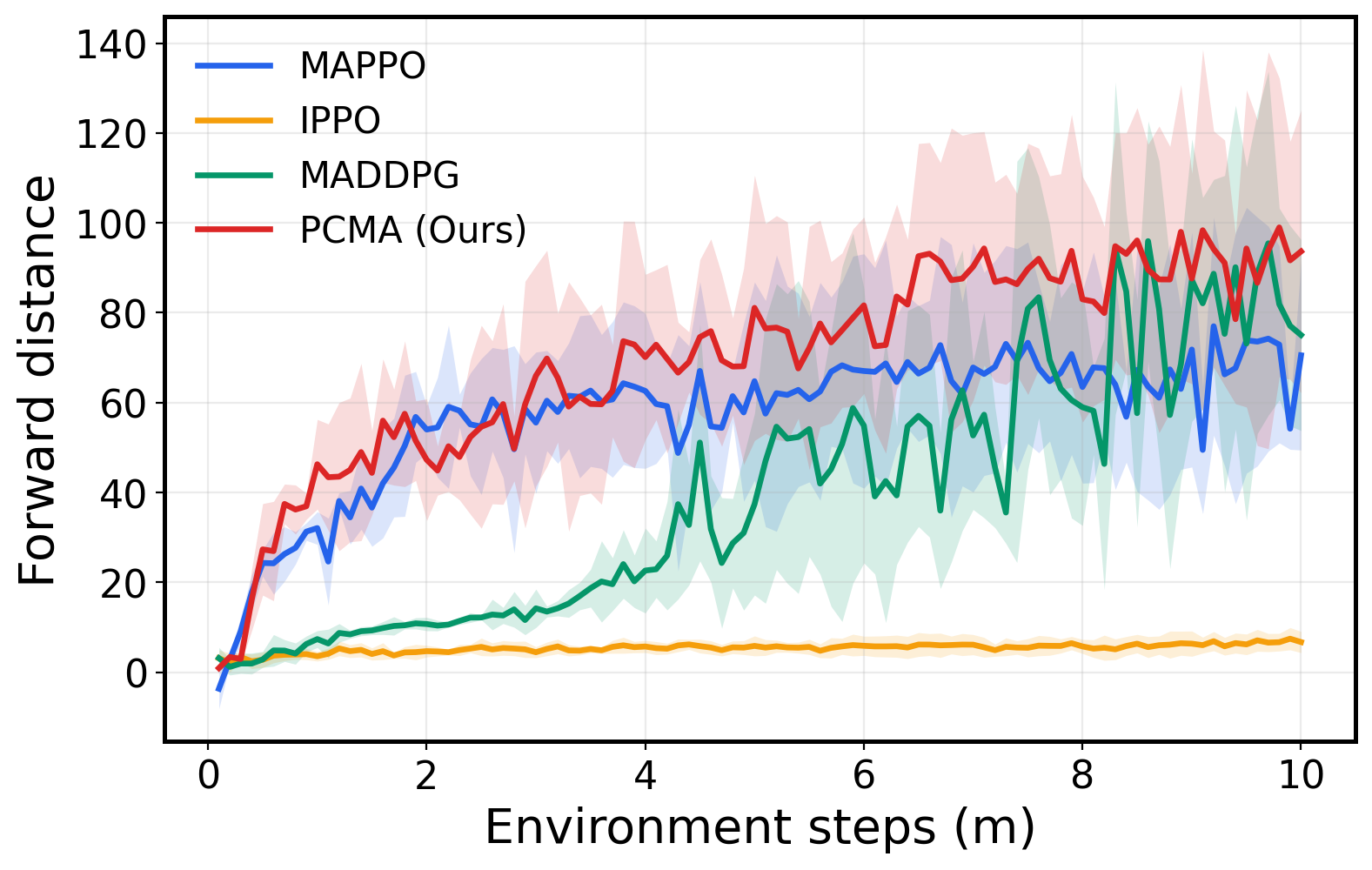}
        \caption{Walker 1}
        \label{fig:results_e}
    \end{subfigure}
    \hfill
    \begin{subfigure}[b]{0.24\textwidth}
        \centering
        \includegraphics[width=\linewidth]{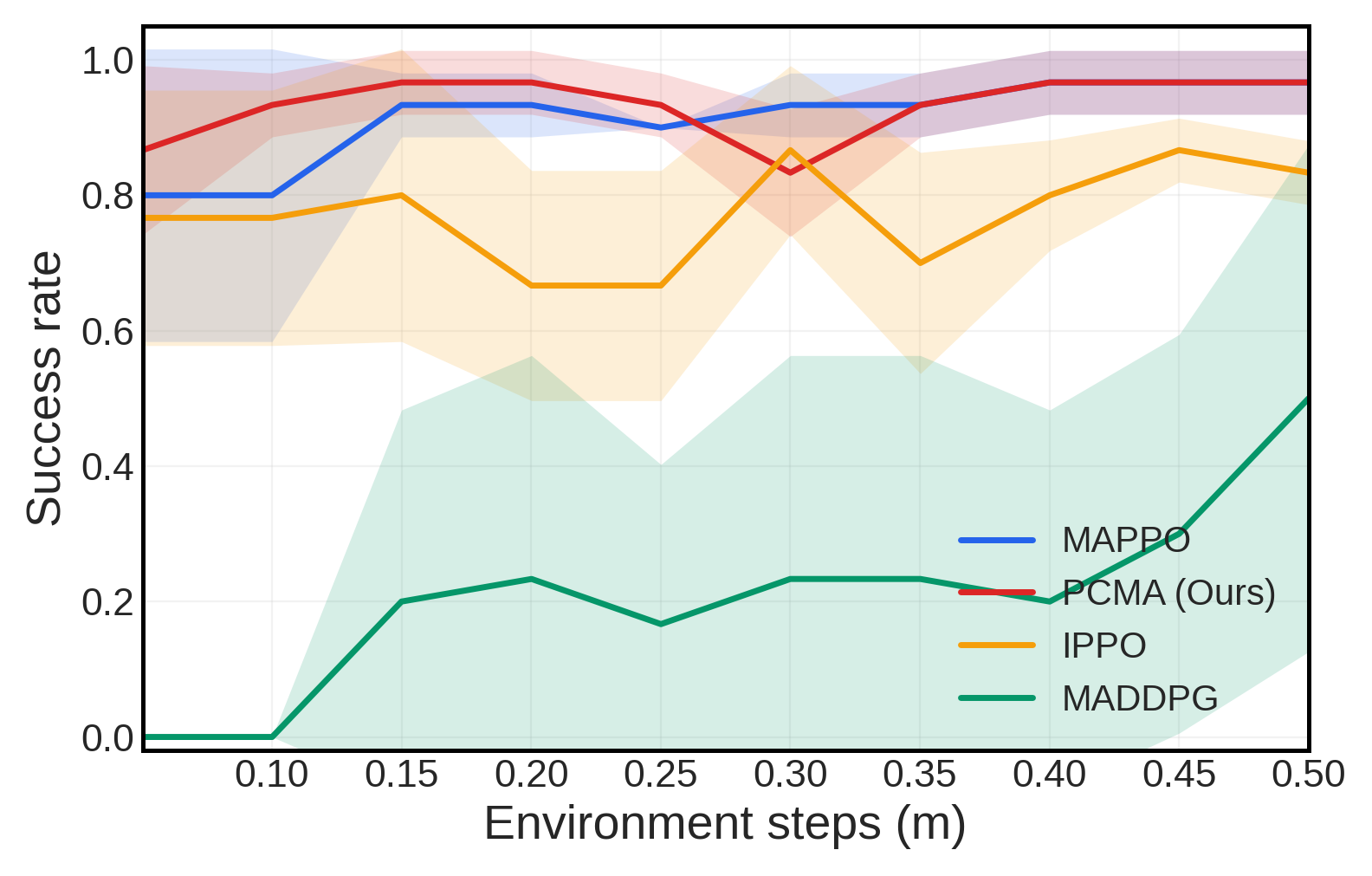}
        \caption{SMAC 3m}
        \label{fig:results_f}
    \end{subfigure}
    \hfill
    \begin{subfigure}[b]{0.24\textwidth}
        \centering
        \includegraphics[width=\linewidth]{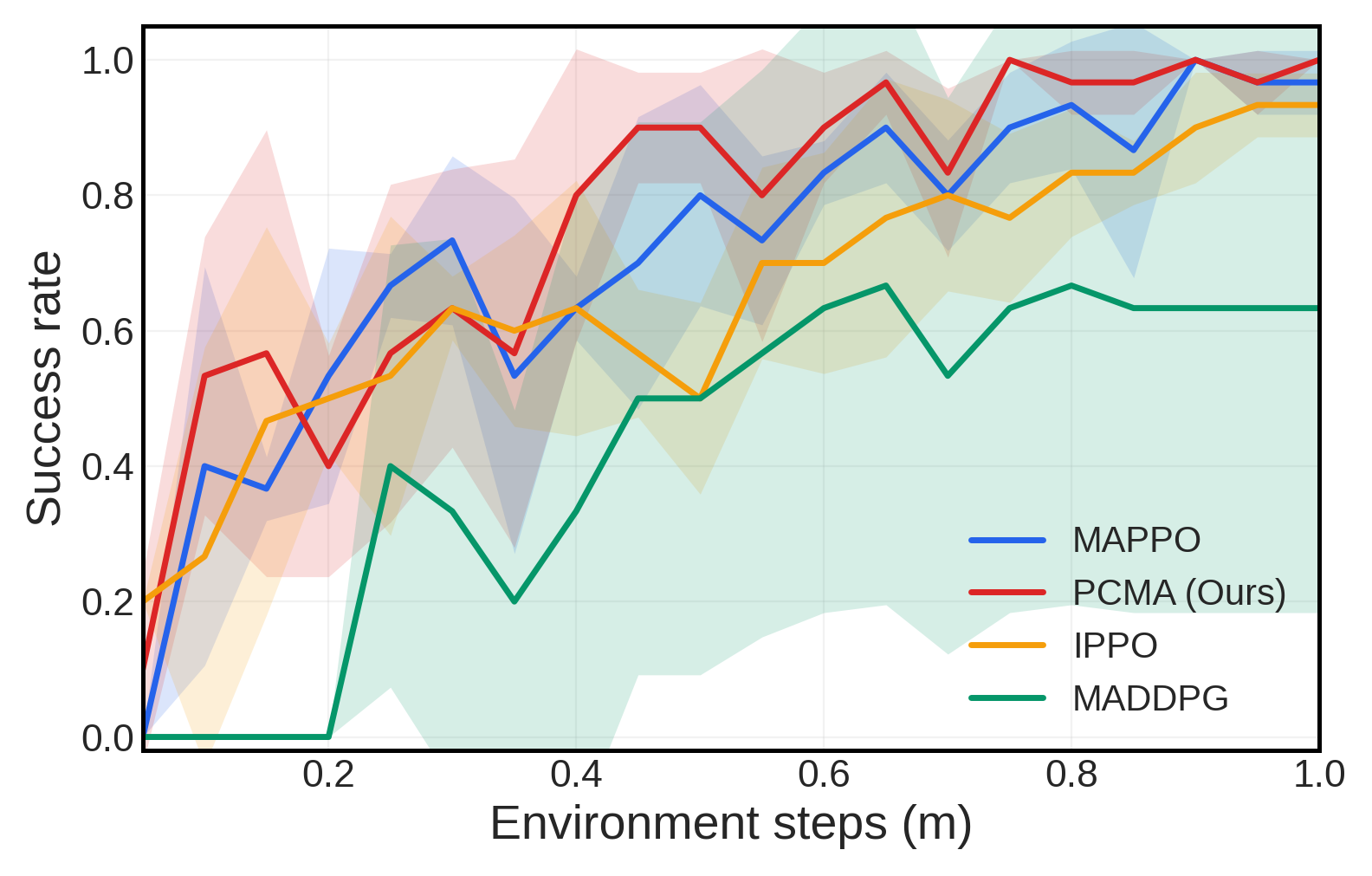}
        \caption{SMAC 2s3z}
        \label{fig:results_g}
    \end{subfigure}
    \hfill
    \begin{subfigure}[b]{0.24\textwidth}
        \centering
        \includegraphics[width=\linewidth]{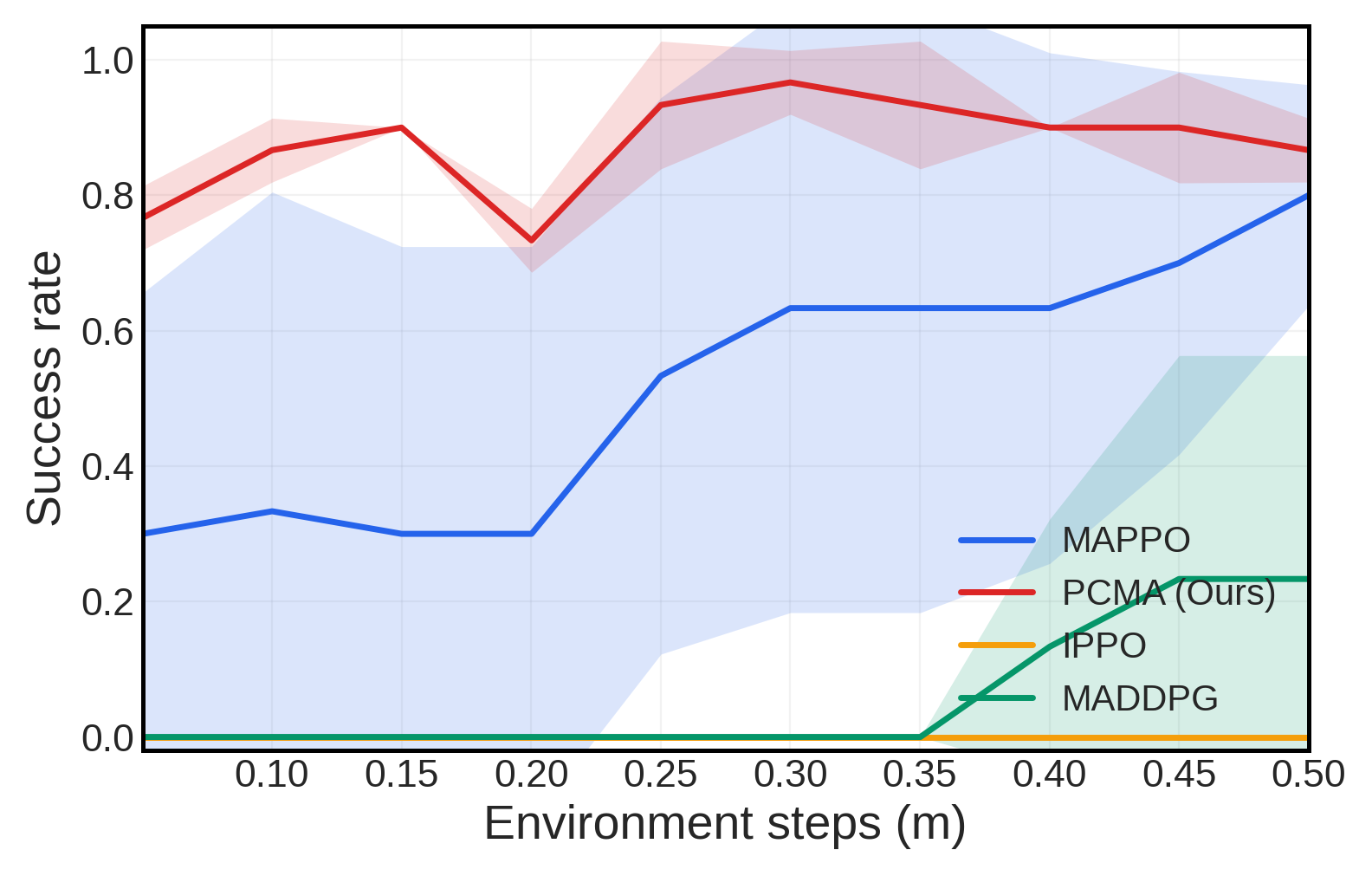}
        \caption{SMAC 8m}
        \label{fig:results_h}
    \end{subfigure}
    \caption{
    Learning curves of PCMA and baseline methods across the evaluated cooperative multi-agent tasks. Curves are averaged over 3 random seeds, and shaded regions indicate standard deviation.
    }
    \label{fig:learning_curves}
\end{figure*}
\begin{table}[htbp]
\centering
\caption{Results on multi-agent tasks. Values in parentheses denote standard deviation.}
\label{tab:main_results}
\setlength{\tabcolsep}{3.5pt} 
\small 
\begin{tabular}{llrrrr}
\toprule
Environment & Metric & MADDPG & IPPO & MAPPO & PCMA (Ours) \\
\midrule
\multirow{2}{*}{Cooperative Spread}
& Success rate
& $0.38_{(0.43)}$
& $0.27_{(0.38)}$
& $0.80_{(0.00)}$
& $\mathbf{1.00}_{(0.00)}$ \\
& Average reward
& $-0.63_{(1.09)}$
& $0.20_{(0.35)}$
& $0.69_{(0.00)}$
& $\mathbf{0.89}_{(0.00)}$ \\
\midrule
\multirow{2}{*}{Safe Predator Prey} 
& Success rate   
& $0.68_{(0.06)}$ 
& $0.60_{(0.01)}$ 
& $0.91_{(0.05)}$ 
& $\mathbf{0.96}_{(0.02)}$ \\
& Average reward 
& $2.30_{(0.10)}$ 
& $2.21_{(0.02)}$ 
& $2.39_{(0.04)}$ 
& $\mathbf{2.45}_{(0.05)}$ \\
\midrule
\multirow{2}{*}{Catch} 
& Success rate   
& $0.87_{(0.12)}$
& $0.24_{(0.15)}$ 
& $0.53_{(0.37)}$  
& $\mathbf{0.94}_{(0.03)}$  \\
& Average reward 
& $4.71_{(7.94)}$ 
& $8.61_{(0.64)}$ 
& $11.33_{(3.83)}$ 
& $\mathbf{14.21}_{(0.83)}$ \\
\midrule
Escort
& Average reward
& $16.38_{(0.80)}$
& $14.39_{(0.74)}$
& $14.48_{(0.63)}$
& $\mathbf{17.29}_{(0.99)}$ \\
\midrule
\multirow{2}{*}{MOMAwalker}
& Forward Distance
& $75.04_{(21.28)}$
& $6.69_{(2.36)}$
& $70.52_{(21.08)}$
& $\mathbf{93.64}_{(31.47)}$ \\
& Average reward
& $18.28_{(16.18)}$
& $-0.80_{(0.37)}$
& $10.11_{(15.35)}$
& $\mathbf{21.62}_{(20.05)}$ \\
\midrule
\multirow{2}{*}{SMAC-3m}
& Success rate
& $0.50_{(0.37)}$
& $0.83_{(0.05)}$
& $\mathbf{0.97}_{(0.05)}$
& $\mathbf{0.97}_{(0.05)}$ \\
& Average reward
& $0.75_{(0.31)}$
& $1.02_{(0.03)}$
& $0.29_{(0.02)}$
& $\mathbf{1.11}_{(0.03)}$ \\
\midrule
\multirow{2}{*}{SMAC-2s3z}
& Success rate
& $0.63_{(0.45)}$
& $0.93_{(0.05)}$
& $0.97_{(0.05)}$
& $\mathbf{1.00}_{(0.00)}$ \\
& Average reward
& $0.22_{(0.10)}$
& $0.29_{(0.01)}$
& $0.30_{(0.01)}$
& $\mathbf{0.31}_{(0.01)}$ \\
\midrule
\multirow{2}{*}{SMAC-8m}
& Success rate
& $0.23_{(0.33)}$
& $0.00_{(0.00)}$
& $0.80_{(0.16)}$
& $\mathbf{0.87}_{(0.05)}$ \\
& Average reward
& $0.08_{(0.12)}$
& $0.00_{(0.00)}$
& $0.27_{(0.04)}$
& $\mathbf{0.28}_{(0.01)}$ \\

\bottomrule
\end{tabular}
\end{table}

\subsection{Ablation Study}
\label{sec:ablation}
As shown in Figure ~\ref{fig:ablation}, we conduct ablation studies on Cooperative Spread to analyze three key factors in PCMA: the diversity regularization coefficient \(\lambda_1\), the actor balancing coefficient \(\lambda_2\), and the learned preference planner. Specifically, \(\lambda_1\) controls the strength of the pairwise diversity regularizer in the planner objective, \(\lambda_2\) balances the team advantage and the scalarized individual advantage in the actor update. Moreover, to validate the effectiveness of our preference coordination, we compare PCMA with RAND and SAME. RAND samples each agent's preference randomly without a learned planner, while SAME forces all agents to use the same preference by averaging their sampled preferences.

\begin{figure*}[htbp]
    \centering
    \begin{subfigure}[t]{0.32\textwidth}
        \centering
        \includegraphics[width=\linewidth]{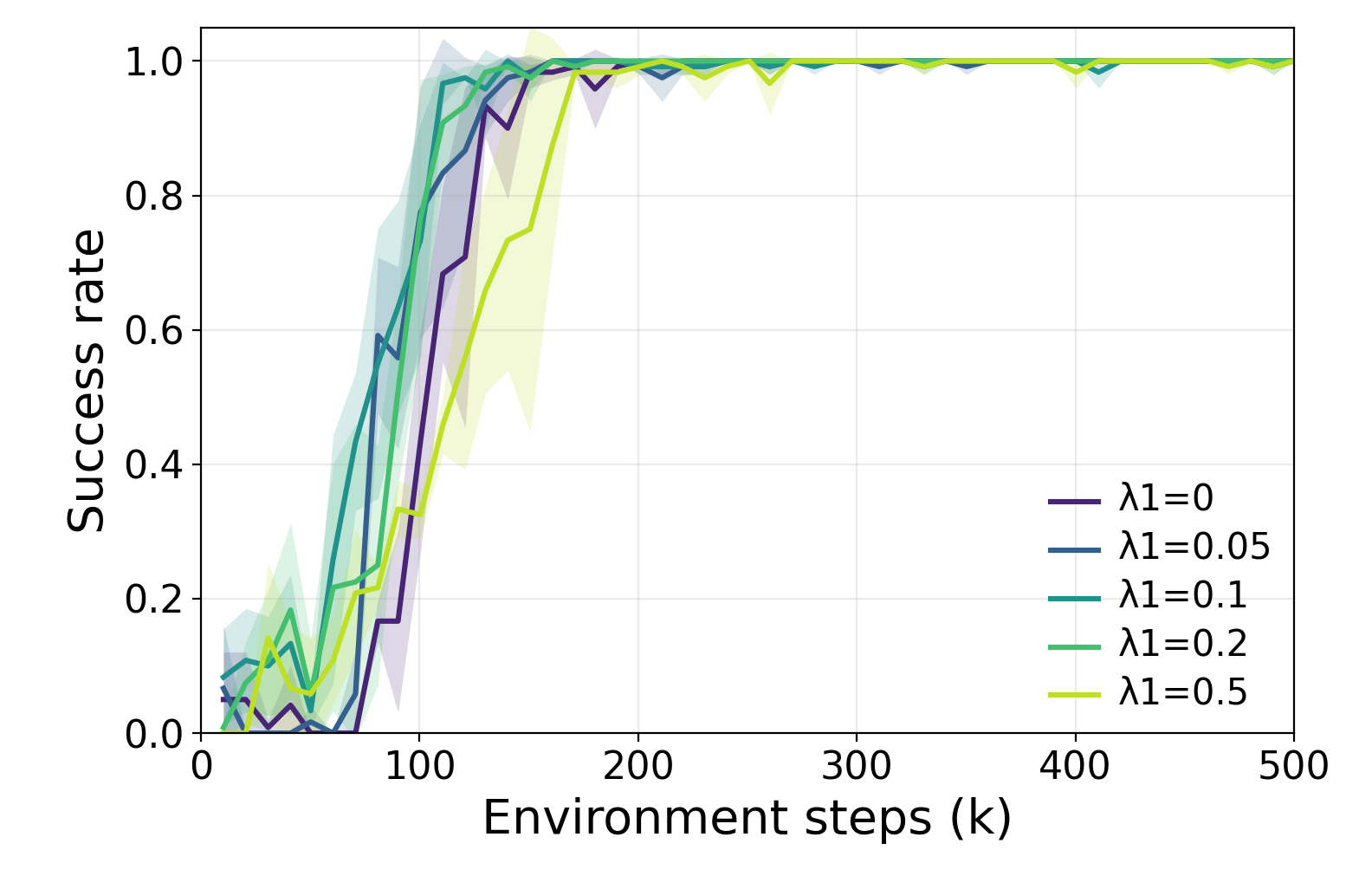}
        \caption{Study of $\lambda_1$.}
        \label{fig:ablation_lambda1}
    \end{subfigure}
    \hfill
    \begin{subfigure}[t]{0.32\textwidth}
        \centering
        \includegraphics[width=\linewidth]{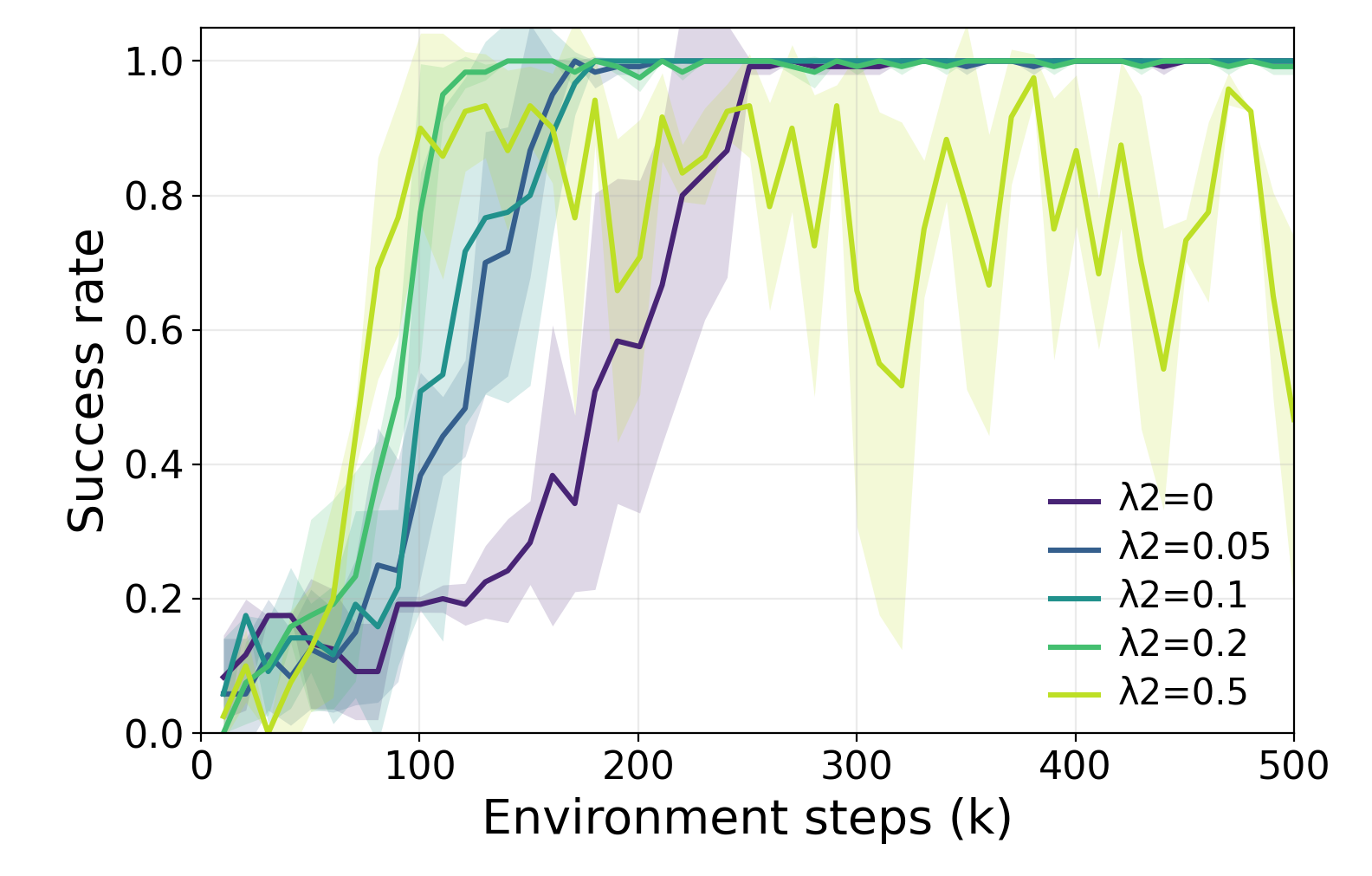}
        \caption{Study of $\lambda_2$.}
        \label{fig:ablation_lambda2}
    \end{subfigure}
    \hfill
    \begin{subfigure}[t]{0.32\textwidth}
        \centering
        \includegraphics[width=\linewidth]{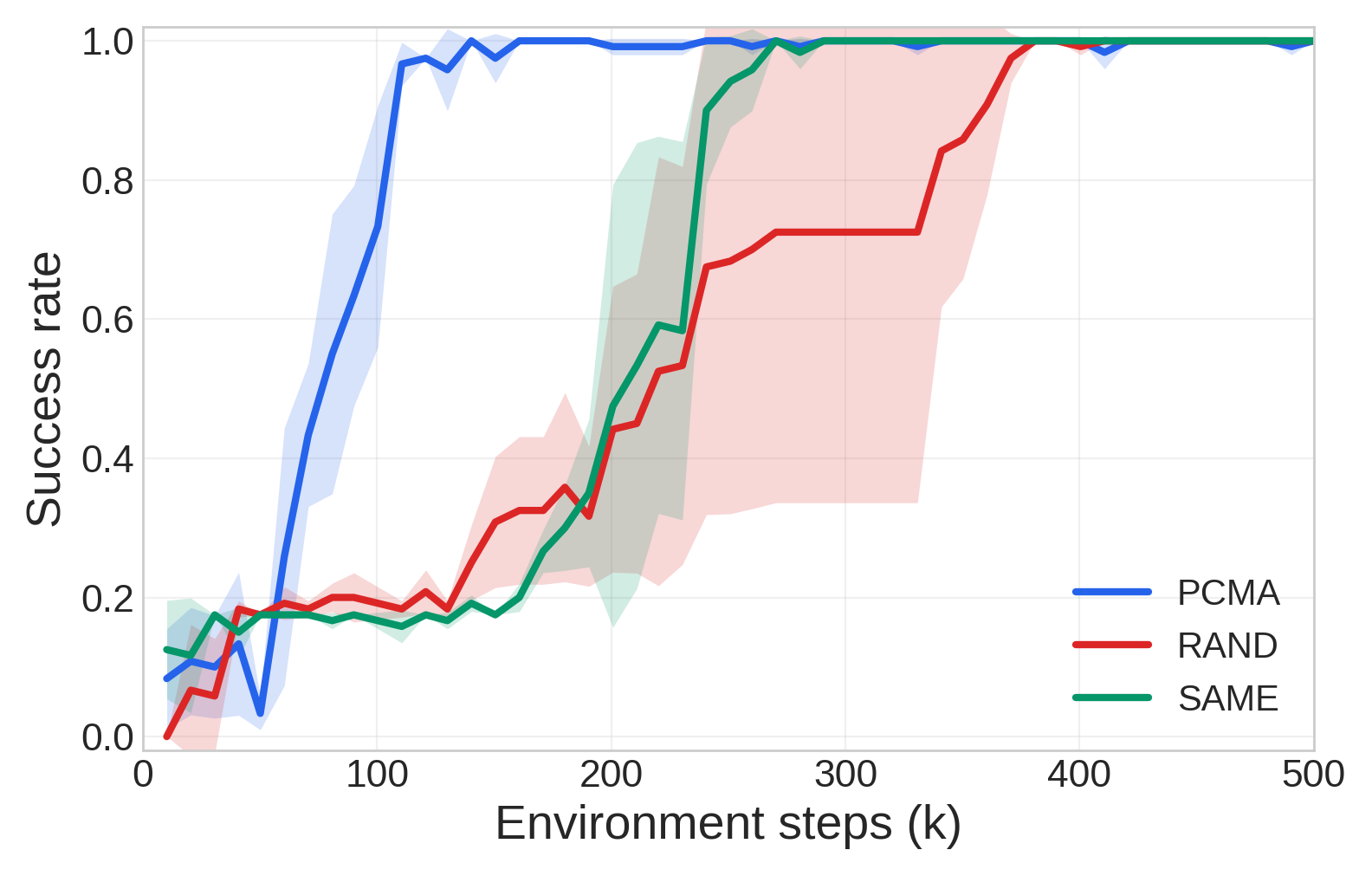}
        \caption{Study of preference planner.}
        \label{fig:ablation_planner}
    \end{subfigure}

    \caption{
    Ablation studies on Cooperative Spread.
    (a) $\lambda_1=0.1$ and $\lambda_1=0.2$ improve early learning over $\lambda_1=0$, suggesting that preference diversity helps avoid collapse, while $\lambda_1=0.5$ slows learning due to excessive diversity.
    (b) $\lambda_2=0.1$ and $\lambda_2=0.2$ achieve stable convergence, whereas $\lambda_2=0.5$ becomes unstable, showing that individual guidance should not dominate the team signal.
    (c) The planner learns coordinated, agent-specific preferences rather than merely injecting random diversity.
    }
    \label{fig:ablation}
\end{figure*}

\subsection{OpenCDA-MARL CARLA Validation}
\label{subsec:opencda_marl_carla_validation}

Additionally, we use OpenCDA-MARL~\cite{OpenCDAMARL} as a high-fidelity validation of preference-conditioned control in CARLA~\cite{CARLA}. Each CAV follows a fixed route through a four-way intersection and learns a target-speed command; the competitive setting reflects crossing order and yielding, not route selection. Rewards are grouped into efficiency and safety/interaction objectives. Detailed setting are shown in Appendix~\ref{app:opencda_details}. The validation results in Table~\ref{tab:opencda_carla_results} shows that preference coordination mechanism improves SAC. These results suggest that off-policy version preference-conditioned coordination is feasible within the current speed-control setting, while broader behavioral validation would require richer action spaces and interaction mechanisms. 

\begin{table*}[!ht]
\centering
\caption{
OpenCDA-MARL/CARLA best-checkpoint evaluation.
Values are reported as mean $\pm$ standard deviation across seeds.
}
\label{tab:opencda_carla_results}
\small
\setlength{\tabcolsep}{5pt}
\renewcommand{\arraystretch}{1.08}

\resizebox{\textwidth}{!}{
\begin{tabular}{llcccc}
\toprule
Setting & Backbone
& Utility $\uparrow$
& Success (\%) $\uparrow$
& Collision (\%) $\downarrow$
& Throughput $\uparrow$ \\
\midrule


Coop. & SAC
& -4776.9 $\pm$ 672.0
& 68.6 $\pm$ 1.9
& 31.4 $\pm$ 1.9
& 1692 $\pm$ 46 \\

Coop. & MAPPO
& -16793.3 $\pm$ 2668.9
& 55.4 $\pm$ 4.7
& 42.5 $\pm$ 3.5
& 1417 $\pm$ 100 \\

Coop. & PCMA
& \textbf{-2072.9 $\pm$ 414.8}
& \textbf{69.6 $\pm$ 1.2}
&  30.4 $\pm$ 1.2
& \textbf{1716 $\pm$ 29} \\
\midrule


Comp. & SAC
& -5084.5 $\pm$ 241.9
& 67.9 $\pm$ 0.7
& 32.1 $\pm$ 0.7
& \textbf{1674 $\pm$ 17} \\

Comp. & MAPPO
& -13660.2 $\pm$ 4695.8
& 42.1 $\pm$ 30.3
& 23.3 $\pm$ 17.6
& 1066 $\pm$ 759 \\

Comp. & PCMA
& \textbf{-2877.1 $\pm$ 151.6}
& \textbf{68.0 $\pm$ 2.2}
& 23.3 $\pm$ 8.3
& 1192 $\pm$ 291 \\

\bottomrule
\end{tabular}
}
\end{table*}

\section{Conclusion and Future Work}
\label{sec:conclusion}

This work frames cooperative MOMARL as a team-optimal equilibrium problem, where the goal is to find a preference profile whose induced equilibrium improves team performance. Based on this view, we propose PCMA, which coordinates agent-specific trade-off directions through stochastic preference planners and preference-conditioned actors. Our analysis connects preference diversity with first-order team improvement and local equilibrium tracking. Experiments on selected cooperative tasks show that learned preference coordination can improve team performance and induce interpretable agent specialization.

One limitation of this work is that our study focuses on controlled cooperative benchmarks, where the objectives and reward decomposition are explicitly specified. A promising direction for future work is to extend preference coordination to more complex real-world multi-agent systems, such as autonomous driving and open-ended agentic systems, where agents may have different capabilities, contexts, and objectives that naturally emerge from real task demands.


\bibliography{bib}
\bibliographystyle{plainnat}  

\newpage

\appendix
\onecolumn
\section{Detailed Proofs for Theoretical Results}
\label{sec:appendix_proofs}
\subsection{Assumptions}
We begin by restating the assumptions required for the theoretical analysis.
\begin{assumption}[Preference-improvement alignment]
\label{ass_re:preference_improvement_alignment}
There exists \(\kappa>0\) such that, for every agent \(i\) with \(\tilde p_i\neq 0\), the projection of \(\tilde b_i\) onto \(\tilde p_i\) is lower bounded by \(\kappa\), i.e.,
\[
\frac{\tilde p_i^\top \tilde b_i}{\|\tilde p_i\|_2^2}
\ge
\kappa,
\qquad i=1,\ldots,N .
\]
where \(\tilde p_i:=p_i-\bar p\) and \(\tilde b_i:=b_i-\bar b\).
\end{assumption}

\begin{assumption}[Attraction property near equilibrium]
\label{ass_re:attraction}
Let \(\mathbf\theta(\mathbf p)\) denote a local Nash stationary solution under the preference profile \(\mathbf p\). Assume that there exist \(\rho\in(0,1)\) and a neighborhood of \(\mathbf\theta(\mathbf p)\) such that
\[
\|\mathbf\theta_{\mathrm{new}}-\mathbf\theta(\mathbf p)\|
\le
\rho
\|\mathbf\theta_{\mathrm{old}}-\mathbf\theta(\mathbf p)\|,
\]
for all \(\mathbf\theta_{\mathrm{old}}\) in this neighborhood.
\end{assumption}

\begin{assumption}[Local Lipschitz continuity]
\label{ass_re:lipschitz}
Let \(\mathbf\theta(\cdot):\mathcal U\to\Theta\) be the local stationary solution mapping induced by 
Lemma 4.3
Assume that this mapping is Lipschitz continuous in \(\mathcal U\): there
exists \(C>0\) such that for any two preference profiles
\(\mathbf p,\mathbf p'\in\mathcal U\),
\[
\|\mathbf\theta(\mathbf p')-\mathbf\theta(\mathbf p)\|
\le
C\|\mathbf p'-\mathbf p\|.
\]
\end{assumption}

\subsection{Proof of Theorem 4.2 in the main text}
\begin{theorem}[Team Improvement Decomposition]
\label{theo_re:team_improvement_decomposition}
For sufficiently small update step $\eta$, the first-order team improvement satisfies
\[
J_{\mathrm{team}}(\mathbf\theta_{\mathrm{new}})
-
J_{\mathrm{team}}(\mathbf\theta)
\ge
\eta
\sum_{i=1}^{N}
\left\|
\nabla_{\theta_i}J_{\mathrm{team}}(\mathbf\theta)
\right\|_2^2
+
\eta N
\left(
\bar p^\top \bar b
+
\kappa \mathcal D_p
\right)
\]
where
\[
\mathcal D_p
:=
\frac{1}{2N^2}
\sum_{i=1}^{N}
\sum_{j=1}^{N}
\|p_i-p_j\|_2^2 .
\]
is the pairwise preference distance across agents.
\end{theorem}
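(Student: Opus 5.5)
Throughout, $b_i\in\mathbb R^K$ is read as the $i$-th row of the team-improvement matrix $B$, i.e.\ $b_{i,k}=B_{i,k}$. This is the reading under which $\bar b$ and $\tilde b_i$ in Assumption~\ref{ass:preference_improvement_alignment} make sense.

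The plan is a first-order Taylor expansion of $J_{\mathrm{team}}$ around $\mathbf\theta$ along the joint update direction, followed by an algebraic rewriting of the cross term using the mean/centered decomposition and Assumption~\ref{ass:preference_improvement_alignment}. The update is $\theta_{i,\mathrm{new}}-\theta_i=\eta\nabla_{\theta_i}U_i(\mathbf\theta;p_i)$, and $\nabla_{\theta_i}U_i=\nabla_{\theta_i}J_{\mathrm{team}}+\sum_k p_{i,k}\nabla_{\theta_i}J_{i,k}$. Expanding to first order gives
\[
J_{\mathrm{team}}(\mathbf\theta_{\mathrm{new}})-J_{\mathrm{team}}(\mathbf\theta)
=\eta\sum_{i=1}^N\bigl(\nabla_{\theta_i}J_{\mathrm{team}}\bigr)^\top\nabla_{\theta_i}U_i+O(\eta^2).
\]
The leading term splits into $\eta\sum_i\|\nabla_{\theta_i}J_{\mathrm{team}}\|_2^2+\eta\sum_i p_i^\top b_i$. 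The statement says ``first-order team improvement'', so I interpret the inequality as holding for the first-order term and drop the $O(\eta^2)$ remainder. If a genuine inequality for small $\eta$ is wanted, the remainder must be absorbed, which only works with strict slack. I would flag this and state the result at first order.

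Next I decompose the cross term. Write $p_i=\bar p+\tilde p_i$ and $b_i=\bar b+\tilde b_i$. Because $\sum_i\tilde p_i=0$ and $\sum_i\tilde b_i=0$, the mixed terms vanish:
\[
\sum_i p_i^\top b_i=N\bar p^\top\bar b+\sum_i\tilde p_i^\top\tilde b_i .
\]
Assumption~\ref{ass:preference_improvement_alignment} gives $\tilde p_i^\top\tilde b_i\ge\kappa\|\tilde p_i\|_2^2$ for each $i$ with $\tilde p_i\ne0$. The same inequality holds trivially when $\tilde p_i=0$, since then both sides are zero. Summing over $i$ yields $\sum_i\tilde p_i^\top\tilde b_i\ge\kappa\sum_i\|\tilde p_i\|_2^2$.

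Finally I need the standard identity relating pairwise distances to the variance around the mean. Expanding $\|\tilde p_i-\tilde p_j\|^2$ and using $\sum_i\tilde p_i=0$ gives
\[
\sum_{i,j}\|p_i-p_j\|_2^2=2N\sum_i\|\tilde p_i\|_2^2,
\qquad\text{so}\qquad
\sum_i\|\tilde p_i\|_2^2=N\mathcal D_p .
\]
Substituting this back produces $\eta N(\bar p^\top\bar b+\kappa\mathcal D_p)$, which completes the bound.

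The steps are routine; the main obstacle is rigor around the Taylor remainder. The ``$\ge$'' is exact only for the first-order approximation. A clean statement would either define the improvement as the first-order term, or assume $J_{\mathrm{team}}$ has $L$-Lipschitz gradient, giving a $-\tfrac{L\eta^2}{2}\sum_i\|\nabla_{\theta_i}U_i\|^2$ correction that is $o(\eta)$. I would present the first-order version and mention this correction.
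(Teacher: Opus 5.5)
Your proposal is correct and follows the paper's proof essentially step for step. Both use the same first-order Taylor expansion along the joint update, split off $\sum_i\|\nabla_{\theta_i}J_{\mathrm{team}}\|_2^2$ from $\sum_i p_i^\top b_i$, apply the mean/centered decomposition and the alignment assumption, and finish with the identity $\sum_i\|\tilde p_i\|_2^2=N\mathcal D_p$. Your caveat about the remainder is well placed: the paper also reaches its final ``$\ge$'' only by ``ignoring the higher-order term'', so the inequality is really a first-order statement, and your explicit treatment of the $\tilde p_i=0$ case fills a small omission in how the paper applies the assumption.
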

\begin{proof}
Given preferences \(p_1,\ldots,p_N\), each agent takes one gradient update
\[
\theta_{i,\mathrm{new}}
=
\theta_i
+
\eta
\nabla_{\theta_i}
U_i(\theta;p_i),
\qquad i=1,\ldots,N .
\]
where
\[
U_i(\theta;p_i)
=
J_{\mathrm{team}}(\theta)
+
\langle p_i,J_i(\theta_i)\rangle,
\]
We write
\(\theta_{\mathrm{new}}=(\theta_{1,\mathrm{new}},\ldots,\theta_{N,\mathrm{new}})\).
By the first-order Taylor expansion of \(J_{\mathrm{team}}\) at \(\theta\),
\begin{align*}
J_{\mathrm{team}}(\theta_{\mathrm{new}})
-
J_{\mathrm{team}}(\theta)
&=
\eta
\sum_{i=1}^{N}
\left\langle
\nabla_{\theta_i}J_{\mathrm{team}}(\theta),
\nabla_{\theta_i}U_i(\theta;p_i)
\right\rangle
+
o(\eta)
\\
&=
\eta
\sum_{i=1}^{N}
\left\langle
\nabla_{\theta_i}J_{\mathrm{team}}(\theta),
\nabla_{\theta_i}J_{\mathrm{team}}(\theta)
+
\nabla_{\theta_i}\langle p_i,J_i(\theta_i)\rangle
\right\rangle
+
o(\eta)
\\
&=
\eta
\sum_{i=1}^{N}
\left\|
\nabla_{\theta_i}J_{\mathrm{team}}(\theta)
\right\|_2^2
+
\eta
\sum_{i=1}^{N}
\left\langle
\nabla_{\theta_i}J_{\mathrm{team}}(\theta),
\sum_{k=1}^{K}p_{i,k}\nabla_{\theta_i}J_{i,k}(\theta_i)
\right\rangle
+
o(\eta)
\\
&=
\eta
\sum_{i=1}^{N}
\left\|
\nabla_{\theta_i}J_{\mathrm{team}}(\theta)
\right\|_2^2
+
\eta
\sum_{i=1}^{N}
\sum_{k=1}^{K}
p_{i,k}
\left\langle
\nabla_{\theta_i}J_{\mathrm{team}}(\theta),
\nabla_{\theta_i}J_{i,k}(\theta_i)
\right\rangle
+
o(\eta)
\\
&=
\eta
\sum_{i=1}^{N}
\left\|
\nabla_{\theta_i}J_{\mathrm{team}}(\theta)
\right\|_2^2
+
\eta
\sum_{i=1}^{N}
\langle p_i,b_i\rangle
+
o(\eta),
\end{align*}

Let \(\tilde p_i=p_i-\bar p\) and \(\tilde b_i=b_i-\bar b\). Then \(\sum_{i=1}^{N}\tilde p_i=0\) and \(\sum_{i=1}^{N}\tilde b_i=0\). Let \(P=(\tilde p_1,\ldots,\tilde p_N)^\top\in\mathbb R^{N\times K}\) be the centered preference matrix, and then crossing term can be decomposed as follows.

\begin{align}
J_{\mathrm{team}}(\theta_{\mathrm{new}})
-
J_{\mathrm{team}}(\theta)
&=
\eta
\sum_{i=1}^{N}
\left\|
\nabla_{\theta_i}J_{\mathrm{team}}(\theta)
\right\|_2^2
+
\eta
\sum_{i=1}^{N}
\left\langle
\bar p+\tilde p_i,
\bar b+\tilde b_i
\right\rangle
+
o(\eta)
\notag
\\
&=
\eta
\sum_{i=1}^{N}
\left\|
\nabla_{\theta_i}J_{\mathrm{team}}(\theta)
\right\|_2^2
+
\eta
\left(
N\langle \bar p,\bar b\rangle
+
\sum_{i=1}^{N}
\langle \tilde p_i,\tilde b_i\rangle
\right)
+
o(\eta)
\notag
\\
&\ge
\eta
\sum_{i=1}^{N}
\left\|
\nabla_{\theta_i}J_{\mathrm{team}}(\theta)
\right\|_2^2
+
\eta
\left(
N\langle \bar p,\bar b\rangle
+
\kappa
\sum_{i=1}^{N}
\|\tilde p_i\|_2^2
\right)
+
o(\eta)\quad \text{(apply Assumption~\ref{ass_re:preference_improvement_alignment})}
\notag\\
&=
\eta
\sum_{i=1}^{N}
\left\|
\nabla_{\theta_i}J_{\mathrm{team}}(\theta)
\right\|_2^2
+
\eta N \bar p^\top\bar b
+
\eta  \kappa \|P\|_F^2 
+
o(\eta)
\notag
\label{eq:team_improvement_lower_bound}
\end{align}
where $\|\cdot\|_F^2$ is the matrix Frobenius norm. Last, we show that the Frobenius norm of centered preference similarity matrix is exactly pairwise preference distance.
\[
\begin{aligned}
\frac{1}{N}\|P\|_F^2
&=
\frac{1}{N}
\sum_{i=1}^{N}
\|p_i-\bar p\|_2^2
\\
&=
\frac{1}{N}
\sum_{i=1}^{N}
\|p_i\|_2^2
-
\|\bar p\|_2^2
\quad \text{(by the variance identity)}
\\
&=
\frac{1}{N}
\sum_{i=1}^{N}
\|p_i\|_2^2
-
\frac{1}{N^2}
\sum_{i=1}^{N}
\sum_{j=1}^{N}
p_i^\top p_j
\quad \text{(by bilinearity)}
\\
&=
\frac{1}{2N^2}
\sum_{i=1}^{N}
\sum_{j=1}^{N}
\left(
\|p_i\|_2^2
+
\|p_j\|_2^2
-
2p_i^\top p_j
\right)
\\
&=
\frac{1}{2N^2}
\sum_{i=1}^{N}
\sum_{j=1}^{N}
\|p_i-p_j\|_2^2
=
\mathcal D_p .
\end{aligned}
\]
Ignoring the higher-order term under a sufficiently small step size \(\eta\),
we obtain the following lower bound
\[
J_{\mathrm{team}}(\theta_{\mathrm{new}})
-
J_{\mathrm{team}}(\theta)
\ge
\eta
\sum_{i=1}^{N}
\left\|
\nabla_{\theta_i}J_{\mathrm{team}}(\theta)
\right\|_2^2
+
\eta N \bar p^\top\bar b
+
\eta  \kappa  N \mathcal D_p
\]
\end{proof}

\subsection{Proof of Lemma 4.3 in the main text} 
\label{pf:continual_equilibrium}
\begin{lemma}[Continuity of preference-conditioned stationary solutions]
\label{lem_re:local_continuity_stationary_solution}
Suppose that \(\mathbf\theta^{*}\) is a local Nash equilibrium of
\(\mathcal G(\mathbf p)\), i.e.,
\[
\nabla_{\theta_i}U_i(\mathbf\theta^{*};p_i)=0,
\qquad i=1,\ldots,N .
\]
Assume that the joint gradient
\((\nabla_{\theta_1}U_1,\ldots,\nabla_{\theta_N}U_N)\) is continuously differentiable near \((\mathbf\theta^{*},\mathbf p)\) with non-singular Jacobian. Then there exists a neighborhood \(\mathcal U\) of \(\mathbf p\) and a unique continuously differentiable mapping \(\mathbf\theta(\cdot):\mathcal U\to\Theta\) such that \(\mathbf\theta(\mathbf p)=\mathbf\theta^{*}\) and, for all \(\tilde{\mathbf p}\in\mathcal U\),
\[
\nabla_{\theta_i}U_i(\mathbf\theta(\tilde{\mathbf p});\tilde p_i)=0,
\qquad i=1,\ldots,N .
\]
\end{lemma}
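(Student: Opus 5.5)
The plan is to recast the stationarity conditions as a single vector equation and apply the implicit function theorem directly. Define the map
\[
F(\mathbf\theta,\mathbf p)
:=
\bigl(\nabla_{\theta_1}U_1(\mathbf\theta;p_1),\ldots,\nabla_{\theta_N}U_N(\mathbf\theta;p_N)\bigr),
\]
which takes values in the same space as \(\mathbf\theta\). It is defined on a neighborhood of \((\mathbf\theta^{*},\mathbf p)\) in \(\Theta\times\mathbb R^{NK}\), where \(\Theta\) is treated as an open subset of a Euclidean parameter space. The hypothesis that \(\mathbf\theta^{*}\) is a local Nash equilibrium gives \(F(\mathbf\theta^{*},\mathbf p)=0\). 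The hypothesis that the joint gradient is \(C^1\) near \((\mathbf\theta^{*},\mathbf p)\) gives \(F\in C^1\).

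I read the ``non-singular Jacobian'' assumption as invertibility of the partial Jacobian with respect to \(\mathbf\theta\), namely \(D_{\mathbf\theta}F(\mathbf\theta^{*},\mathbf p)\). This is the block game Jacobian whose \((i,j)\) block is \(\nabla^2_{\theta_i\theta_j}U_i\). It is square, which is exactly what the implicit function theorem needs. I would state this reading explicitly: the full Jacobian in \((\mathbf\theta,\mathbf p)\) is rectangular, so "non-singular" can only refer to the \(\mathbf\theta\)-block.

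It is also worth noting the dependence on \(\mathbf p\). Since \(U_i=J_{\mathrm{team}}+p_i^\top\mathbf J_i\), the map \(F\) is affine in \(\mathbf p\), and
\[
D_{p_i}F_i=\bigl(\nabla_{\theta_i}\mathbf J_i\bigr)^\top .
\]
So \(C^1\) regularity in \(\mathbf p\) is automatic once the \(\theta\)-derivatives of \(J_{\mathrm{team}}\) and \(\mathbf J_i\) are continuous.

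With this in place, the implicit function theorem gives neighborhoods \(\mathcal U\ni\mathbf p\) and \(\mathcal V\ni\mathbf\theta^{*}\), and a unique \(C^1\) map \(\mathbf\theta:\mathcal U\to\mathcal V\), such that \(\mathbf\theta(\mathbf p)=\mathbf\theta^{*}\) and \(F(\mathbf\theta(\tilde{\mathbf p}),\tilde{\mathbf p})=0\) for every \(\tilde{\mathbf p}\in\mathcal U\). Reading this componentwise gives the stated stationarity conditions. The theorem also yields the derivative formula
\[
D\mathbf\theta(\tilde{\mathbf p})=-\bigl[D_{\mathbf\theta}F\bigr]^{-1}D_{\mathbf p}F ,
\]
which I would record because it will later justify the Lipschitz constant in Assumption~\ref{ass_re:lipschitz}: shrink \(\mathcal U\) to a compact convex set on which \(D\mathbf\theta\) is bounded.

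The one point needing care is the sense of uniqueness. It is local: \(\mathbf\theta(\tilde{\mathbf p})\) is the unique zero of \(F(\cdot,\tilde{\mathbf p})\) inside \(\mathcal V\), not globally. I would phrase the conclusion that way.

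I would also remark that the lemma only tracks stationary points. Preserving the local Nash property, meaning negative definiteness of each \(\nabla^2_{\theta_i\theta_i}U_i\), would follow by continuity after shrinking \(\mathcal U\) further, if that property holds at \(\mathbf\theta^{*}\). No step here is a genuine obstacle; the argument is a direct application of the implicit function theorem.
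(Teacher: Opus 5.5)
Your proposal is correct and follows the paper's proof essentially verbatim. Both define the joint stationarity map $F(\mathbf\theta,\mathbf p)$, note $F(\mathbf\theta^{*},\mathbf p)=0$, apply the implicit function theorem using invertibility of the $\mathbf\theta$-partial Jacobian $\nabla_{\mathbf\theta}F(\mathbf\theta^{*},\mathbf p)$, and read the conclusion componentwise; your extra remarks on the affine dependence of $F$ on $\mathbf p$, the derivative formula, and the local sense of uniqueness are accurate refinements that the paper does not spell out.
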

\begin{proof}
Define the joint stationarity mapping
\[
F(\mathbf\theta,\mathbf p)
:=
\bigl(
\nabla_{\theta_1}U_1(\mathbf\theta;p_1),
\ldots,
\nabla_{\theta_N}U_N(\mathbf\theta;p_N)
\bigr).
\]
Since \(\mathbf\theta^{*}\) is a local Nash equilibrium of \(\mathcal G(\mathbf p)\), by definition
\[
F(\mathbf\theta^{*},\mathbf p)=0 .
\]
By assumption, \(F\) is continuously differentiable near
\((\mathbf\theta^{*},\mathbf p)\), and its Jacobian \(\nabla_{\mathbf\theta}F(\mathbf\theta^{*},\mathbf p)
\) is nonsingular. Therefore, by the Implicit Function Theorem, there exists a
neighborhood \(\mathcal U\) of \(\mathbf p\) and a unique continuously
differentiable mapping \(\mathbf\theta(\cdot):\mathcal U\to\Theta\) such that
\(\mathbf\theta(\mathbf p)=\mathbf\theta^{*}\) and
\[
F(\mathbf\theta(\tilde{\mathbf p}),\tilde{\mathbf p})=0,
\qquad
\forall\,\tilde{\mathbf p}\in\mathcal U .
\]
Expanding the definition of \(F\), this gives
\[
\nabla_{\theta_i}
U_i(\mathbf\theta(\tilde{\mathbf p});\tilde p_i)=0,
\qquad i=1,\ldots,N ,
\]
which proves the result.
\end{proof}

\subsection{Proof of Theorem 4.6 in the main text}
\label{pf_tracking}
\begin{theorem}[Equilibrium tracking]
\label{thm_re:tracking}
Consider the iterative policy update
\[
\theta_i^{t+1}
=
\theta_i^t
+
\eta
\nabla_{\theta_i}U_i(\theta^t;\mathbf p^t),
\qquad i=1,\ldots,N .
\]
Under Assumption 4.4 
and Assumption 4.5 on page 5 of the main text, 
the tracking error
$
e_t
:=
\|\theta^t-\theta(\mathbf p^t)\|
$
satisfies
\[
e_{t+1}
\le
\rho e_t
+
C\|\mathbf p^{t+1}-\mathbf p^t\|.
\]
In particular, if
\(\|\mathbf p^{t+1}-\mathbf p^t\|\le\delta\), then
\[
\limsup_{t\to\infty} e_t
\le
\frac{C}{1-\rho}\delta .
\]
\end{theorem}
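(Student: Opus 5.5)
The proof is a triangle inequality followed by a standard unrolling of a linear recursion. The idea is to compare the new iterate $\theta^{t+1}$ with the equilibrium $\theta(\mathbf p^{t+1})$ by passing through the equilibrium $\theta(\mathbf p^t)$ of the game used at step $t$:
\[
e_{t+1}=\|\theta^{t+1}-\theta(\mathbf p^{t+1})\|
\le
\|\theta^{t+1}-\theta(\mathbf p^{t})\|
+
\|\theta(\mathbf p^{t})-\theta(\mathbf p^{t+1})\| .
\]

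For the first term, note that $\theta^{t+1}$ is exactly one preference-conditioned gradient step for the game $\mathcal G(\mathbf p^t)$ started from $\theta_{\mathrm{old}}=\theta^t$. Assumption~\ref{ass:attraction}, applied with profile $\mathbf p^t$, then gives
\[
\|\theta^{t+1}-\theta(\mathbf p^{t})\|\le\rho\,\|\theta^t-\theta(\mathbf p^t)\|=\rho e_t .
\]
For the second term, Assumption~\ref{ass:lipschitz} gives the bound $C\|\mathbf p^{t+1}-\mathbf p^t\|$, since both profiles lie in the neighborhood $\mathcal U$ from Lemma~\ref{lem:local_continuity_stationary_solution}. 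Adding the two bounds yields the one-step recursion $e_{t+1}\le\rho e_t+C\|\mathbf p^{t+1}-\mathbf p^t\|$.

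For the asymptotic claim, assume $\|\mathbf p^{t+1}-\mathbf p^t\|\le\delta$ and unroll the recursion by induction:
\[
e_t\le\rho^t e_0+C\delta\sum_{s=0}^{t-1}\rho^s\le\rho^t e_0+\frac{C\delta}{1-\rho}.
\]
Because $\rho\in(0,1)$, the term $\rho^t e_0$ tends to $0$, and therefore $\limsup_{t\to\infty} e_t\le C\delta/(1-\rho)$.

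The main obstacle is ensuring the assumptions stay applicable along the whole trajectory. This has two parts. First, every $\mathbf p^t$ must remain in $\mathcal U$, so that the map $\theta(\cdot)$ is defined and Lipschitz. Second, every $\theta^t$ must remain in the attraction neighborhood of $\theta(\mathbf p^t)$. I would handle both by an invariance argument. Suppose the attraction neighborhood contains a ball of radius $r$ around each $\theta(\mathbf p)$ with $\mathbf p\in\mathcal U$. Suppose also that $e_0\le r$ and $C\delta\le(1-\rho)r$. Then the recursion gives $e_{t+1}\le\rho r+(1-\rho)r=r$, so by induction every iterate stays in the region where Assumption~\ref{ass:attraction} holds. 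Since this uniformity is only implicit in the stated assumptions, I would state it explicitly as part of the standing local hypotheses.
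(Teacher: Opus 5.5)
Your proof is correct and matches the paper's argument: the triangle inequality through $\theta(\mathbf p^t)$, Assumption~\ref{ass:attraction} for the contraction term $\rho e_t$, Assumption~\ref{ass:lipschitz} for the drift term, and then unrolling the linear recursion. Your invariance remark (uniform radius $r$ with $e_0\le r$ and $C\delta\le(1-\rho)r$) is a worthwhile addition that the paper leaves implicit, but it only keeps the $\theta$-iterates inside the attraction region; that every $\mathbf p^t$ stays in $\mathcal U$ does not follow from $\|\mathbf p^{t+1}-\mathbf p^t\|\le\delta$ and has to be assumed directly.
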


\begin{proof}
For any \(t\ge 0\), by adding and subtracting \(\theta(\mathbf p^t)\), we have
\[
\begin{aligned}
e_{t+1}
&=
\|\theta^{t+1}-\theta(\mathbf p^{t+1})\|
\\
&\le
\|\theta^{t+1}-\theta(\mathbf p^t)\|
+
\|\theta(\mathbf p^t)-\theta(\mathbf p^{t+1})\|.
\end{aligned}
\]

First, by Assumption 4.4,
the policy update is locally contractive around
\(\theta(\mathbf p^t)\). Therefore,
\[
\|\theta^{t+1}-\theta(\mathbf p^t)\|
\le
\rho
\|\theta^t-\theta(\mathbf p^t)\|
=
\rho e_t .
\]
Second, by Assumption 4.5, 
the equilibrium path changes Lipschitz continuously with the preference profile, so
\[
\|\theta(\mathbf p^t)-\theta(\mathbf p^{t+1})\|
\le
C\|\mathbf p^{t+1}-\mathbf p^t\|.
\]
Combining the two bounds gives
\[
e_{t+1}
\le
\rho e_t
+
C\|\mathbf p^{t+1}-\mathbf p^t\|.
\]

If \(\|\mathbf p^{t+1}-\mathbf p^t\|\le\delta\), then
\[
e_{t+1}
\le
\rho e_t + C\delta .
\]
Applying this inequality recursively gives
\[
\begin{aligned}
e_t
&\le
\rho e_{t-1}+C\delta
\\
&\le
\rho^2 e_{t-2}
+
C\delta(1+\rho)
\\
&\le
\cdots
\\
&\le
\rho^t e_0
+
C\delta
\sum_{k=0}^{t-1}\rho^k\\
&=
\rho^t e_0
+
C\delta
\frac{1-\rho^t}{1-\rho}.
\end{aligned}
\]
Since $\rho\in(0,1)$, take \(\limsup_{t\to\infty}\), and we obtain
\[
\limsup_{t\to\infty} e_t
\le
\frac{C}{1-\rho}\delta .
\]
\end{proof}

\section{Implementation Details}
To support reproducibility, we provide detailed environment specifications, training hyperparameters, and pseudocode in the appendix. The source code, including environment modifications and training scripts, is available at \url{https://github.com/PengxinWang/PrefMARL}.
\label{sec:imple_details}
\subsection{Environment Setting}
\label{subsec:env_details}
This subsection provides additional details of the environments used in our experiments. We describe the task setup, action space, reward design, and success criterion for each environment. The detailed reward design and success criterion are summarized in Table~\ref{tab:env_settings}.
\begin{figure}[t]
\centering
\begin{subfigure}[t]{0.24\linewidth}
    \centering
    \includegraphics[width=\linewidth]{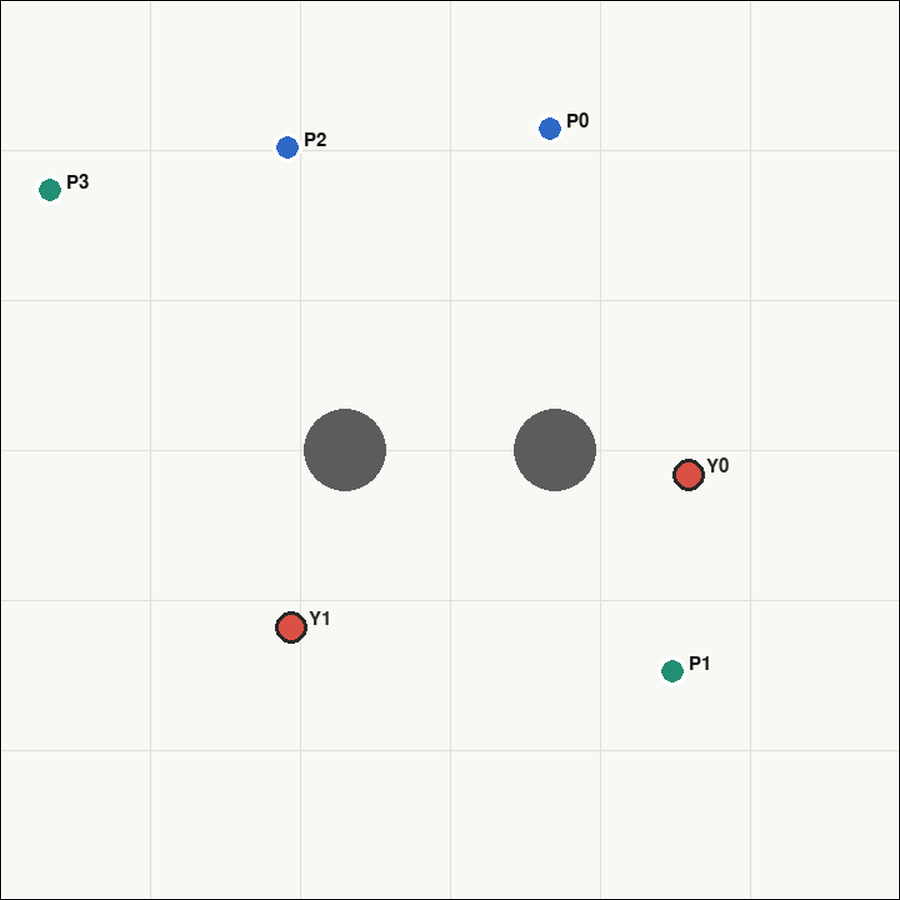}
    \caption{MOMPE}
\end{subfigure}
\hfill
\begin{subfigure}[t]{0.24\linewidth}
    \centering
    \includegraphics[width=\linewidth]{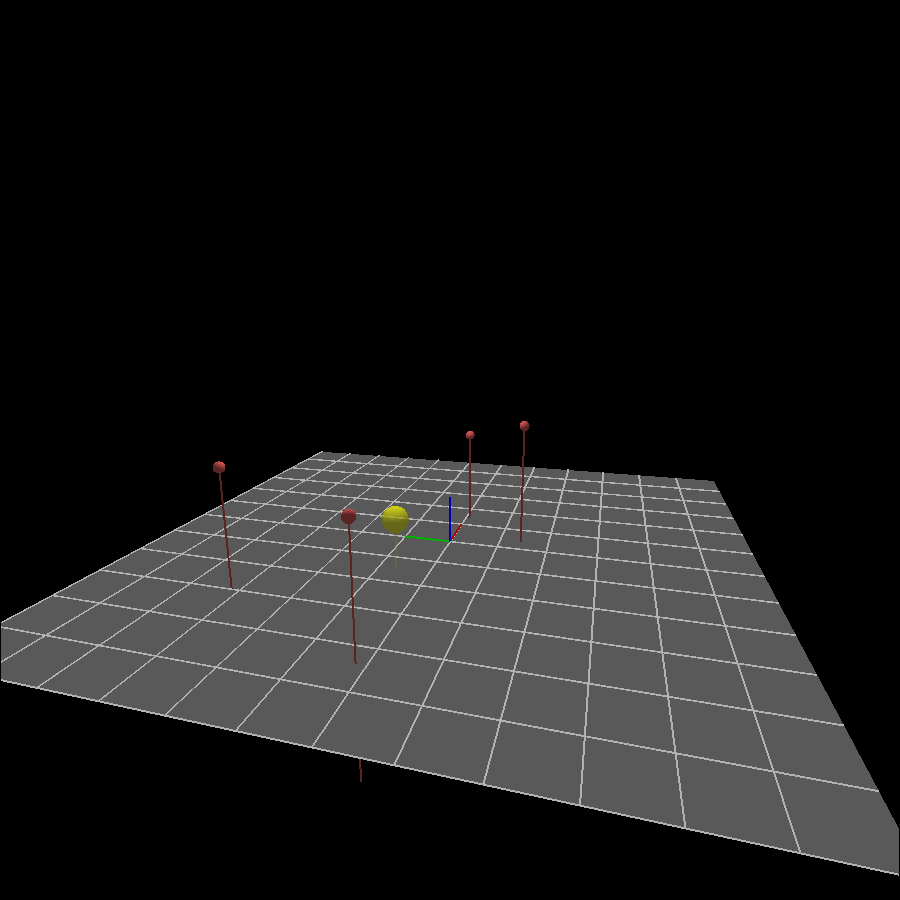}
    \caption{Catch/Escort}
\end{subfigure}
\hfill
\begin{subfigure}[t]{0.24\linewidth}
    \centering
    \includegraphics[width=\linewidth]{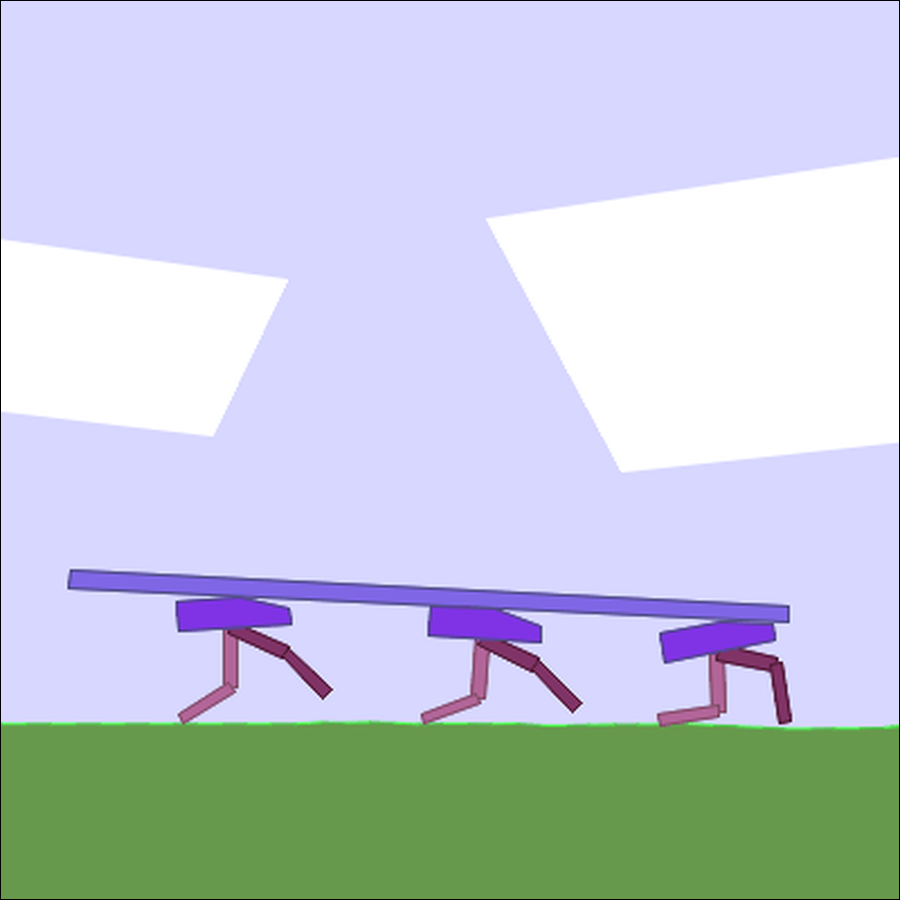}
    \caption{MOMAWalker}
\end{subfigure}
\hfill
\begin{subfigure}[t]{0.24\linewidth}
    \centering
    \includegraphics[width=\linewidth]{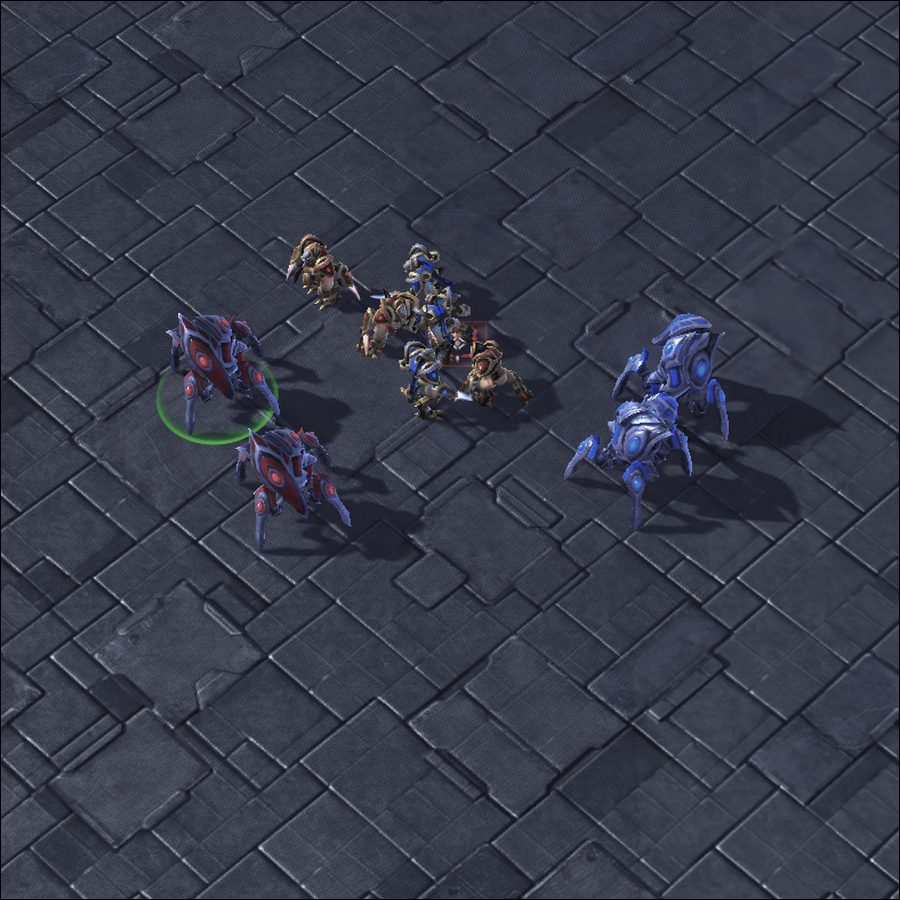}
    \caption{SMAC}
\end{subfigure}
\caption{Environments used in our experiments.}
\label{fig:env-snapshots}
\end{figure}

\begin{table}[t]
\centering
\caption{Reward and success details}
\label{tab:env_settings}
\small
\setlength{\tabcolsep}{3pt}
\renewcommand{\arraystretch}{1.3}
\begin{tabular}{llp{3.0cm}p{3.0cm}p{3.2cm}}
\toprule
Environment & Dimension & Team reward & Individual reward & Success criterion \\
\midrule
Cooperative Spread & 2A2O &
Sparse task reward or punishment: $+1$ when both landmark reached, $-1$ on collision.&
[- distance to landmark 1, - distance to landmark 2]. &
Two agents occupy two distinct landmarks. \\
\midrule
Safe Predator-Prey & 4A3O &
Sparse task reward or punishment: $+1$ when catch prey, $-1$ on collision or bump into obstacle.&
[progress to prey 1, progress to prey 2, -close distance to obstacle/other predator]. &
All prey are captured. \\
\midrule
Catch & 4A2O &
Sparse task reward or punishment: $+1$ when catch target, $-1$ on collision. &
[progress to target, -close distance to other drone] &
catch the target \\
\midrule
Escort & 4A2O &
only collision penalty &
[- distance to target, distance to other drones] &
Not Applicable \\
\midrule
MOMAWalker & 3A2O &
Package progress distance, fall/drop penalties. &
[Walker progress, stability penalty] &
Not Applicable \\
\midrule
SMAC & xA2O &
+1 on success. &
[damage dealt, -health loss] &
Defeat all enemies within episode time \\
\midrule
OpenCDA-MARL & $\leq$10A2O &
Route success, progress, step penalty, and collision penalty. &
[efficiency, safety/\newline interaction risk] &
Route completion without collision. \\
\bottomrule
\end{tabular}
\end{table}
\paragraph{Cooperative Spread and Safe Predator-Prey.}
We use two particle-world tasks with discrete actions. Spread is a two-agent, two-landmark variant of MPE simple spread, where agents must coordinate to cover distinct landmarks. Predator-Prey uses four predators, two moving prey, and obstacles in a bounded 2D arena. Compared with Spread, Predator-Prey introduces moving targets, obstacle avoidance, and a three-objective reward vector that separates progress toward the two prey from safety.

\paragraph{Catch and Escort.}
Catch and Escort are continuous-control drone tasks adapted from CrazyRL/MOMALand. In both environments, four drones move in a bounded 3D space with 3D velocity actions. Catch focuses on approaching and capturing a target while avoiding unsafe proximity to other drones. Escort instead requires drones to maintain an escort formation around a moving target as it travels toward a goal. Thus, Catch emphasizes interception, while Escort emphasizes sustained formation control.

\paragraph{MOMAWalker.}
MOMAWalker is adapted from the Multiwalker domain. Three walkers jointly transport a package across terrain using continuous joint-torque actions. We use this environment to evaluate coordinated continuous control under a shared physical task, where agents must trade off forward progress with package and body stability.

\paragraph{SMAC.}
SMAC comprises cooperative StarCraft Multi-Agent Challenge maps, including 3m, 2s3z, and 8m, and difficulty is set as 5. In these tasks, agents control decentralized units using discrete actions with action masks. We use a two-dimensional combat reward vector based on damage dealt and health loss, together with a sparse team-success signal indicating whether all enemy units are defeated within the episode limit.

\paragraph{OpenCDA-MARL.}
OpenCDA-MARL extends OpenCDA with a MARL controller for CARLA intersection control. The environment is a four-way intersection on a local OpenDRIVE map, with traffic entering from north, south, east, and west, and each episode lasts at most $2400$ simulation steps. During training, we use live traffic with at most $10$ active MARL-controlled CAVs; for checkpoint evaluation, we replay a fixed traffic file to compare algorithms under the same arrival pattern.

\subsection{OpenCDA-MARL/CARLA Experimental Details}
\label{app:opencda_details}
In Carla's world, connected autonomous vehicles (CAVs) are spawned at the intersection and scheduled to reach their destinations. Each CAV follows a fixed route generated by the OpenCDA planner. The learned action is a one-dimensional target speed command in $\mathrm{km \cdot h^{-1}}$, clipped to $[0,45]$, and the low-level planner tracks the resulting speed along the assigned route. Thus the experiment tests learned yielding and crossing-order behavior under speed control, not route selection. The base observation includes relative position to the intersection, absolute position, lane position, heading angle, distance to the intersection, distance to the front vehicle, and waypoint-buffer information. Preference-conditioned variants append a two-dimensional preference vector to this state.

We constructed 2 reward profiles, \texttt{cooperative} and \texttt{competitive}, for both cooperative and competitive runs. Both profiles use a sparse event structure with collision penalty $-400$, success reward $300$, step penalty $-0.8$, and progress shaping scaled by $0.4$. For cooperative preference experiments, reward components are grouped into two objectives: efficiency, containing progress, success, and step penalty; and safety, containing collision. For competitive preference experiments, the second objective is written as interaction risk and is also driven by collision, while time-to-collision, yielding, and clearance are tracked as metric-only signals.

The reported metrics are scalarized utility, success rate, collision rate, and throughput in vehicles per hour. Utility is the accumulated scalarized objective value under the assigned preference, success rate is the percentage of vehicles that complete their route, collision rate is the percentage of vehicles that collide, and throughput is computed from successful completions per simulated hour.

\subsection{Training Details}
\label{subsec:training_details}
Algorithm~\ref{alg:pcma_detailed} gives the full training procedure of PCMA.

\paragraph{Training setup.}
Unless otherwise stated, all on-policy methods use the same PPO-style training setup. We use Adam for both actor and critic optimization, and all MLPs use Tanh
activations. For PCMA and MAPPO-style baselines, the actor is a shared preference-conditioned MLP with two hidden layers of size \(64\). The actor separately encodes the local observation and the preference, then fuses the two embeddings before the action head. The preference embedding dimension is \(32\). The team critic is a centralized vector critic conditioned on the global state and preference, with two hidden layers of size \(128\). When individual critics are enabled, they use the same centralized critic architecture but output agent-wise vector values. Table~\ref{tab:pcma_hyperparams} lists the additional hyperparameters used by PCMA.

\paragraph{Training time.} All PCMA experiments were run on an internal Linux workstation with NVIDIA RTX PRO 6000 Blackwell GPUs (98GB GPU memory each). Each training run used one GPU and one training process. Non-SMAC experiments used \(4\) parallel environment workers, while SMAC used \(1\) environment worker due to the single-environment action-mask path. The measured wall-clock training time for one PCMA seed was \(11.6\) minutes for Cooperative Spread, \(21.2\) minutes for Catch, \(1.12\) hours for Safe Predator-Prey, \(4.49\) hours for MultiWalker, \(1.12\) hours for SMAC-3m, \(2.31\) hours for SMAC-2s3z, and \(1.54\) hours for SMAC-8m. Main comparisons use three seeds, so the reported PCMA training compute is approximately three times these per-seed costs for each environment. Execution-time evaluation and rendering use fixed policy rollouts without gradient updates and were run on the same machine; these jobs are lightweight compared with training and typically finish within minutes per checkpoint.

\begin{algorithm}[htbp]
\caption{Preference Coordinated Multi-Agent Policy Optimization (PCMA)}
\label{alg:pcma_detailed}
\begin{algorithmic}[1]
\Require number of agents \(N\), objective dimension \(d\), rollout length \(T\)
\Require actor \(\pi_\theta\), preference planner \(\phi_\psi\), team critic
\(\mathbf V_\eta^{\mathrm{team}}\), individual critics
\(\{\mathbf V_{\xi}^{i}\}_{i=1}^{N}\)
\For{iteration \(=1,2,\ldots\)}
    \For{\(t=0,\ldots,T-1\)}
        \For{agent \(i=1,\ldots,N\)}
            \State Observe local observation \(o_{i,t}\)
            \State Generate inherent preference
            \[
                \mathbf p_{i,t}\sim \mathrm{Dirichlet}
                \big(\phi_\psi(o_{i,t},\bar{\mathbf p})\big)
            \]
            \State Sample action
            \[
                a_{i,t}\sim
                \pi_\theta(\cdot\mid o_{i,t},\mathbf p_{i,t})
            \]
        \EndFor
        \State Execute joint action
        \(\mathbf a_t=(a_{1,t},\ldots,a_{N,t})\)
        \State Store
        \[
        (\mathbf o_t,\bar{\mathbf p},\{\mathbf p_{i,t}\}_{i=1}^{N},
        \mathbf a_t,\log\pi_{\theta_{\mathrm{old}}},
        \mathbf r_t^{\mathrm{team}},
        \{\mathbf r_{i,t}^{\mathrm{ind}}\}_{i=1}^{N},
        \mathbf o_{t+1})
        \]
    \EndFor

    \State Estimate team vector returns
    \(\hat{\mathbf R}^{\mathrm{team}}\) and team advantages
    \(\mathbf A^{\mathrm{team}}\) with GAE
    \State Estimate individual vector returns
    \(\hat{\mathbf R}_{i}^{\mathrm{ind}}\) and advantages
    \(\mathbf A_{i}^{\mathrm{ind}}\) with GAE
    \State Scalarize the team advantage:
    \[
        A^{\mathrm{team}} =
        \bar{\mathbf p}^{\top}\mathbf A^{\mathrm{team}}
    \]
    \State Compute each agent's individual utility advantage:
    \[
        A_{U_i} =
        \mathbf p_i^{\top}\mathbf A_i^{\mathrm{ind}}
    \]
    \State Form the actor advantage:
    \[
        A_i =
        A^{\mathrm{team}}+\lambda_{\mathrm{ind}}A_{U_i}
    \]

    \For{PPO epoch \(=1,\ldots,K\)}
        \For{minibatch \(\mathcal B\)}
            \State Update critics by minimizing
            \[
            \mathcal L_{\mathrm{critic}}
            =
            \mathcal L_{\mathrm{team}}
            +
            \mathcal L_{\mathrm{ind}}.
            \]
            \State Compute PPO ratio
            \[
            \rho_i =
            \frac{
            \pi_\theta(a_i\mid o_i,\mathbf p_i)
            }{
            \pi_{\theta_{\mathrm{old}}}(a_i\mid o_i,\mathbf p_i)
            } .
            \]
            \State Update actor with clipped PPO objective:
            \[
            \mathcal L_{\mathrm{actor}}
            =
            -\mathbb E_{\mathcal B}
            \left[
            \min\left(
            \rho_i A_i,
            \mathrm{clip}(\rho_i,1-\epsilon,1+\epsilon)A_i
            \right)
            \right]
            -\beta_{\mathrm{ent}}\mathcal H(\pi_\theta).
            \]
            \State Compute preference diversity regularizer
            \(\mathcal D_\alpha(\{\mathbf p_i\}_{i=1}^{N})\)
            \State Update planner with
            $
            \mathcal L_{\mathrm{plan}}.
            $
        \EndFor
    \EndFor
\EndFor
\end{algorithmic}
\end{algorithm}

\begin{table}[htbp]
\centering
\caption{Common training setup.}
\label{tab:common_training_setup}
\begin{tabular}{ll}
\toprule
Hyperparameter & Value \\
\midrule
Optimizer & Adam \\
Actor learning rate & \(3\times 10^{-4}\) \\
Critic learning rate & \(3\times 10^{-4}\) \\
Discount factor \(\gamma\) & \(0.99\) \\
GAE parameter \(\lambda\) & \(0.95\) \\
Rollout length & \(256\) \\
PPO epochs & \(4\) \\
Minibatches per update & \(4\) \\
PPO clipping coefficient \(\epsilon\) & \(0.2\) \\
Value loss coefficient & \(0.5\) \\
Entropy coefficient & \(0.01\) \\
Max gradient norm & \(0.5\) \\
Actor hidden layers & \([64,64]\) \\
Preference encoder dimension & \(32\) \\
Critic hidden layers & \([128,128]\) \\
Actor parameter sharing & Fully shared across agents \\
Agent-ID embedding & Dimension \(8\) \\
\bottomrule
\end{tabular}
\end{table}

\begin{table}[htbp]
\centering
\caption{PCMA-specific hyperparameters.}
\label{tab:pcma_hyperparams}
\begin{tabular}{lll}
\toprule
Environment & \(\lambda_1\) & \(\lambda_2\) \\
\midrule
Simple Spread & \(0.10\) & \(0.20\) \\
Catch & \(0.05\) & \(0.10\) \\
Escort & \(0.05\) & \(0.10\) \\
Predator-Prey & \(0.05\) & \(0.20\) \\
MultiWalker & \(0.02\) & \(0.20\) \\
SMAC & \(0.02\) & \(0.10\) \\
\bottomrule
\end{tabular}
\end{table}

\newpage
\section{Illustrative Examples}
\subsection{Example for Pareto-Nash Equilibrium}
\label{app:pareto-nash}
 For example, consider a two agent two objective game. Each agent follows a Gaussian policy with fixed variance, $a_1\sim\mathcal{N}(\mu_1,1)$ and $a_2\sim\mathcal{N}(\mu_2,1)$, where $\mu_1,\mu_2\in[-1,1]$. Define team reward as
\[
(r_1, r_2)
=
\left(
\exp\!\left[-\tfrac{1}{2}\Big((a_1-1)^2 + \tfrac{1}{2}(a_2-1)^2\Big)\right],
\;
\exp\!\left[-\tfrac{1}{2}\Big(\tfrac{1}{2}(a_1+1)^2 + 2(a_2+1)^2\Big)\right]
\right).
\]
In this example, every policy parametrized by $(\mu_1,\mu_2)\in[-1,1]^2$ is a Pareto--Nash equilibrium, since any unilateral change that improves one objective necessarily decreases the other. However, most of these equilibria are not pareto optimal

\subsection{A Toy Example of Better Team Improvement with Agent-specific Preferences}
\label{gaussian_example}

We provide a simple two-agent Gaussian policy example to illustrate why diverse preferences can induce better team solutions than a single shared preference. Each agent uses a Gaussian policy with fixed variance:
\[
a_i \sim \pi_i(\cdot;\mu_i)=\mathcal N(\mu_i,\sigma),
\qquad i\in\{1,2\},
\]
where \(\mu_i\) is the only optimizable parameter. Let
\(\mu_i\in[-1.5,1.5]\), \(\sigma=1\), and define the expected team return as
\(R(\mu)=\mathbb E_{a\sim\pi_\mu}[r(a)]\), where
\(r(a)=(r_1(a),r_2(a))\). Agent \(1\) and agent \(2\) use preferences
\[
w_1=(\alpha,1-\alpha), \qquad
w_2=(\beta,1-\beta).
\]

The reward components are
\[
r_1(a_1,a_2)
=
\frac{1}{\sqrt{3}}
\exp\left(
-\frac{(a_1-1)^2}{4}
-\frac{(a_2-1)^2}{6}
\right),
\]
\[
r_2(a_1,a_2)
=
\frac{1}{\sqrt{4.5}}
\exp\left(
-\frac{(a_1+1)^2}{6}
-\frac{(a_2+1)^2}{3}
\right).
\]

Starting from the same initial policy mean \(\mu_0=(0,0)\), we compare
gradient ascent under a shared preference and under agent-specific preferences.
With \(w_1=w_2=(0.5,0.5)\), the trajectory converges to
\(\mu_{\mathrm{shared}}=(-1.000,-1.000)\), with
\(R(\mu_{\mathrm{shared}})=(0.905,0.905)\) and
\(U_{\mathrm{team}}=0.905\). In contrast, with agent-specific preferences
\(w_1=(1.000,0.000)\) and \(w_2=(0.000,1.000)\), the trajectory converges to
\(\mu_{\mathrm{agent}}=(1.000,1.000)\), with
\(R(\mu_{\mathrm{agent}})=(1.105,1.105)\) and
\(U_{\mathrm{team}}=1.105\), where
\(U_{\mathrm{team}}(R)=\frac12(R_1+R_2)\). Thus, agent-specific preferences improve the final team utility by \(0.200\),
or approximately \(22.1\%\).


\end{document}